\newtheorem{defi}{Definition}
\newcolumntype{L}[1]{>{\raggedright\arraybackslash}p{#1}}
\newtheorem{theo}[defi]{Theorem}
\newtheorem{lem}[defi]{Lemma}
\newcommand{\NP}{$\mathsf{NP}$-hard}
\newcommand{\Pclass}{$\mathsf{P}$}
\title{Why Districting Becomes \NP}
\author[1]{Niklas Jost\thanks{Corresponding author: \href{mailto:niklas.jost@tu-dortmund.de}{niklas.jost@tu-dortmund.de}}}
\author[2]{Adolfo R. Escobedo}
\author[1,3]{Alice Kirchheim}
\affil[1]{Chair of Material Handling and Warehousing, TU Dortmund University, 44227 Dortmund, Germany}
\affil[2]{Edward P. Fitts Department of Industrial and Systems Engineering, North Carolina State University, Raleigh, North Carolina 27695, United States of America}
\affil[3]{Fraunhofer Institute for Material Flow and Logistics, 44227 Dortmund, Germany}
\date{\today}
\begin{document}
\maketitle
\begin{abstract}
This paper investigates \emph{why} and \emph{when} the edge-based districting problem becomes computationally intractable. The overall problem is represented as an exact mathematical programming formulation consisting of an objective function and several constraint groups, each enforcing a well-known districting criterion such as balance, contiguity, or compactness. While districting is known to be \NP\ in general, we study what happens when specific constraint groups are relaxed or removed. The results identify precise  boundaries between tractable subproblems (in \( \mathcal{P} \)) and intractable ones (\NP). The paper also discusses implications on node-based analogs of the featured districting problems, and it considers alternative notions of certain criteria in its  analysis.   
\end{abstract}

\vspace{1em}
\noindent\textbf{Keywords:} Districting, \NP ness, Territorial Planning, Graph Partitioning, Computational Complexity, Graph Theory

\section{Introduction}

\textit{Districting} defines a general class of problems for partitioning a set of \textit{basic units} (e.g., zip codes, customers) comprising a geographical region into small clusters according to definite planning criteria. In addition to imposing the structural requirements of such  partitions (i.e., exclusive and exhaustive allocation of the basic units among a fixed set of districts), districting problems tend to emphasize the enforcement of three basic criteria: \textit{balance} -- service demand should be allocated evenly among the districts; \textit{compactness} -- districts should be nearly square or round-shaped, non-distorted, without holes, and with smooth boundaries \citep{butsch2014districting}; and \textit{contiguity} -- it should be possible to move between any two basic units of the district without having to leave the district. 

The formal requirements of a districting plan are encapsulated via deterministic optimization models, for which customized solution approaches are usually developed in tandem. Owing to the inherent computational intractability of districting problems, the predominant emphasis has been on heuristics that can handle large-scale instances (e.g., \cite{bozkaya2003tabu}, \cite{mehrotra1998optimization} \cite{rios2021location}) but lack formal optimality guarantees. Exact algorithmic approaches have also been developed, but they are notoriously difficult to scale. It is well known that the overall problem is \NP. However, the precise role that individual requirements play in contributing to this complexity has not been systematically investigated. In particular, it remains unclear which specific combinations of districting design criteria render the problem computationally intractable.

Understanding which combinations of requirements  cause computational intractability has important practical consequences, as decision-makers must weigh the relevant trade-offs: either insist on enforcing all criteria and face costly or potentially infeasible computations, settle for quick solution approaches (i.e., heuristics) to the detriment of formal optimality guarantees, or relax certain constraints to allow efficient computation at the expense of model fidelity. A clear classification of which requirements render the problem tractable is therefore essential for informed, principled decision-making. This paper aims to fill this gap by listing common design criteria and analyzing the computational complexity of problems that result from enforcing different combinations of these requirements. Our focus is on problems where the edges are the basic units to be partitioned, motivated by logistics applications where strategic territory design is needed to facilitate to travel within the districts in subsequent operational activities (e.g., maintenance, patrolling, etc.). We provide a complete classification for this problem class of  interest, proving for each subproblem whether it is solvable in polynomial time or whether it is \NP.


The remainder of this paper is structured as follows.  Section~\ref{sec:Re} reviews related work, notably on  districting, node-based/edge-based variants of this class of problems, and  differing related notions of contiguity. Section~\ref{sec:Pre} introduces the necessary preliminaries and formal definitions. Section~\ref{sec:Result} presents the core results, which are supported by a detailed proof framework. The proofs are organized into three parts: (a) variants that are solvable in polynomial time, which are presented in Section~\ref{sec:P}; (b) variants that are NP-complete, which are detailed in Section~\ref{sec:np}; and (c) reductions showing how certain variants generalize or subsume others, which are featured in Section~\ref{sec:Reduc}. Section~\ref{sec:Bonus} considers the minimal imbalance necessary when contiguity is jointly imposed in the two-district case such that there is always a valid solution; a strict bound is provided for this special case. Finally, Section~\ref{sec:con} concludes the paper with a summary of the findings and directions for future research.

\section{Related Work}\label{sec:Re}

A expansive overview of districting models and algorithmic approaches developed to solve them is provided by Kalcsics \& Ríos-Mercado~\cite{kalcsics2019districting}, who present the underlying problems as a generalization of classical location problems, enriched with various general and domain-specific considerations. These models typically aim to allocate voters or geographic units to a limited number of centers while satisfying additional design criteria or requirements (i.e., constraints). Districting problems have been associated historically with political contexts, where they are utilized to define the boundaries of single-member voting precincts. In this setting, satisfying each of the three basic planning criteria described in the preceding section is either mandated by law or regarded as being indispensable for ensuring fair and representative outcomes in the division of a state's population units (e.g., subdivisions or census tracts) \citep{hoj96opt}. Objective criteria for district selection is essential, since systematic manipulation of districts by a party (i.e., gerrymandering) can strongly influence election outcomes \citep{apollonio2009bicolored}. However, relaxing some of the districting requirements is often a necessary step towards obtaining a workable (though not necessarily feasible) districting design \citep{ric13pol}. Such considerations give way to questions regarding computational complexity, which are also germane when considering the opposing  perspective. For example, determining an optimal gerrymandered solution is $\mathcal{NP}$-hard, even for relatively simple graph structures such as trees with diameter 4 \citep{bentert2023complexity}.  

In recent years, districting methodologies have also gained importance withing logistics contexts, where they are implemented towards the strategic design of regions where a specific type of service is to be subsequently provided \citep{mou17upd}. Logistics districting applications are wide-ranging and include police patrolling, delivery of emergency services, and facilities planning \citep{kal05tow}. In these settings, the three design criteria are said to induce practical benefits such as shorter travel times, but their enforcement is less stringent relative to political contexts. For example, deviations of $10\%$ to $20\%$ from perfectly balanced districts are commonly considered (e.g., see \cite{daskin1997network} and \cite{noh2025inbound}). Moreover, logistics districting can be interpreted as a generalization of various \textit{location-allocation problems}, which seek to find the optimal placement of a set of facilities to serve a given set of demand (e.g., the $p$-median \citep{ave07com} and extensions thereof \citep{she00dis}). In other words, removing certain basic districting requirements can yield other classic optimization problems that are of wider interest. Thus, the classification undertaken in this work could bring more direct benefits to these contexts.

The focus of this work is on districting problems for logistics contexts, which are typically defined on planar graphs. While certain results could be applicable to political districting, the two broad categories of problems are not fully interchangeable. One significant distinction is the choice of basic units to be partitioned. Whereas graph-based political districting models have focused exclusively on nodes as the basic units to be partitioned \citep{ric13pol}, logistics districting models can also feature the partitioning of edges, directed (e.g., see \cite{butsch2014districting}) or undirected (e.g., see \cite{kassem2023models}). Another key distinction is the predominance of continuous geometric distances  in political districting, as opposed to discrete network-based distances in logistics districting, which more  accurately capture geographical considerations \citep{car12doe,zol83sal}. 

The distinguishing characteristics of the featured logistics districting problem shape the complexity analyses performed in this work. Hence, these defining aspects are covered in more detail in the remainder of this section; others are elaborated in the subsequent section.  

\subsection{A Brief Comparison of Node-based and Edge-based Districting Problems}

Most existing logistics districting problems focus on the nodes or vertices of the graph as the basic units to be partitioned. Their goal is to designate a fixed number of vertices as the \emph{centers} of the districts and to allocate all other vertices (along with their demands for a certain service) to the selected centers.  Node-based problems  generalize certain facility location-location problems, such as the $p$-median or $p$-center, extending them by incorporating districting-specific criteria (e.g., balance, contiguity). The reader is directed to Kalcsics \& Ríos-Mercado~\cite{kalcsics2019districting} for a survey of various models and algorithms that have been introduced for node-based districting across various logistics applications. 

The computational complexity of node-based districting has been examined in various settings. Dharangutte et al.~(\cite{dharangutte2025hardness} consider a model where districts must balance multiple community types. They prove that many variants remain $\mathsf{NP}$-hard even on restricted graph structures such as trees or planar graphs, while also presenting approximation algorithms for specific cases. Cohen-Addad et al.~\cite{cohen2020computational} explore the computational hardness of node-based districting more broadly including establishinging strong inapproximability results. The authors also identify tractable instances, specifically providing a fully polynomial-time approximation scheme (FPTAS) for complete graphs and trees, and they design greedy algorithms with provable guarantees under constraints such as bounded degrees, binary vertex weights, or limited district sizes.

In contrast to node-based models, edge-based districting treats edges---representing road segments, communication links,  flow paths, etc.---as the basic units  to be allocated to districts. Note, however, that the centers are still designated from the graph's vertices. The edge-based approach more precisely captures the focus of various practical scenarios, such as postal delivery and maintenance activities, where the road network infrastructure forms the basis for partitioning \citep{assad95arc,kan97des}. Algorithmic approaches for these types of logistics districting problems remain relatively underexplored in comparison to their node-based counterparts. Here, the focus has been on heuristic  approaches, which are well-suited for large-scale
problem instances but lack formal optimality guarantees (e.g., see \cite{butsch2014districting}). Exact approaches have focused on cut-generation schemes for implementing a prevalent class of contiguity constraints with exponential cardinality (e.g., see  \cite{garcia2016novel}).  

Studies relating to the computational complexity of edge-based districting problems are also relatively scant. One notable exception is Bang-Jensen et al.~\cite{ban22com}, who consider several digraph problems with a common task of partitioning arcs in two distinct sets $A_1,A_2$ such that the digraphs $D_1=(V,A_1)$ and $D_2=(V,A_2)$ each satisfy a specific property. The pairs of properties considered are generated from a set of 15 properties of interest, some of which relate to districting design criteria. This includes connectedness and the requirement that each partition has at least or at most a certain number of arcs, which is a simplified version of the balance criterion in districting. However, the analysis in \cite{ban22com} considers imposing only one property at a time for each set, and thus it is significantly different from districting, which requires multiple design criteria and structural requirements to be simultaneously satisfied by $p\ge2$ partitions. Another relevant study is  Cordone \& Maffioli~\cite{cordone2004complexity}, which investigates the complexity of graph tree partitioning problems under various objectives. The authors show that certain objectives, such as min–max and max–min, or constraints like \textit{Inclusion} and \textit{Weight}, render the problems particularly challenging from a computational complexity perspective.  
In contrast to our work, these results focus on minimizing spanning trees within each partition, rather than minimizing the sum of distances to a designated center node.


The presented work contributes to the understanding of the edge-based districting problem. Nevertheless, all results can be transferred to the node-based districting setting, depending on the specific definition of contiguity adopted in that context; the next paragraphs elaborate on the definition utilized herein. More specifically, most of the provided proofs can be directly extended to the node-based case. In cases where this transfer is not immediate, additional reasoning is provided to demonstrate how the result carries over following the proof for the edge-based case. 





\subsection{Notions of Districting Contiguity}

Balance and compactness are typically defined analogously in node-based and  edge-based models. However, modeling \emph{contiguity} in node-based settings has led to several different approaches in the literature:

\begin{enumerate}
    \item \textbf{Geometric Contiguity}: districts are contiguous if they share a boundary in a non-graph-based spatial representation (see~\cite{kalcsics2019districting}).
    \item \textbf{Vertex-Induced Subgraph Connectivity}: a district consisting of a vertex set $T \subseteq V$ must induce a connected subgraph $G[T]$ (see~\cite{dharangutte2025hardness}).
    \item \textbf{Edge-Based Connectivity for Assignments}: there exists an assignment of edges to districts such that the edges corresponding to each district form a connected subgraph connecting all the assigned vertices (see~\cite{cordero2025optimizing}).
\end{enumerate}

Definitions 2 and 3 are directly applicable in graph-theoretic settings, and any solution satisfying \textit{vertex-induced subgraph connectivity} also satisfies \textit{edge-based connectivity for assignments} but not vice versa. To understand the difference, consider a star graph as in Figure \ref{node-dist}: under Definition 2, a valid solution may not exist since the center node must appear in all districts to ensure connectivity. In contrast, under Definition 3, one can easily construct connected districts using edge assignments. 

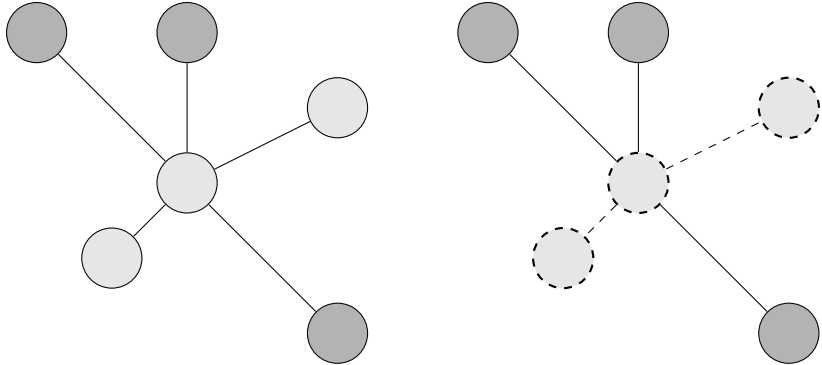
\begin{figure}
\centering
\begin{tikzpicture}[every node/.style={circle,draw,minimum size=8mm,inner sep=0pt}]
    \node[fill=gray!20] (c1) at (0,0) {};
    \node[fill=gray!60] (a1) at (0,2) {};
    \node[fill=gray!20]  (b1) at (2,1) {};
    \node[fill=gray!60] (d1) at (-2,2) {};
    \node[fill=gray!20]  (e1) at (-1,-1) {};
    \node[fill=gray!60] (f1) at (2,-2) {};
    
    \draw (c1)--(a1);
    \draw (c1)--(b1); 
    \draw (c1)--(d1);
    \draw (c1)--(e1);
    \draw (c1)--(f1);

    \node[fill=gray!20, dashed, thick] (c2) at (6,0) {};
    \node[fill=gray!60] (a2) at (6,2) {};
    \node[fill=gray!20, dashed, thick]  (b2) at (8,1) {};
    \node[fill=gray!60] (d2) at (4,2) {};
    \node[fill=gray!20, dashed, thick]  (e2) at (5,-1) {};
    \node[fill=gray!60] (f2) at (8,-2) {};
    
    \draw (c2)--(a2);
    \draw[dashed] (c2)--(b2); 
    \draw (c2)--(d2);
    \draw[dashed] (c2)--(e2);
    \draw (c2)--(f2);

\end{tikzpicture}
\caption{The left graph violates \textit{vertex-induced subgraph connectivity} as the darker vertices are not connected; the right graph satisfies \textit{edge-based connectivity for assignments} as there is an allocation of edges such that all vertices of the same type are connected to each other.}
\label{node-dist}
\end{figure}

These distinctions are important not only theoretically but also practically:

\begin{itemize}
    \item \emph{Variant 2} is suitable for political redistricting, where each district must be a geographically or administratively connected unit.
    \item \emph{Variant 3} is more appropriate in logistics contexts, where connectivity is often motivated by the need of continuous travel within each district.
\end{itemize}

Based on the scope of this work, we adopt Variant 3. This choice helps avoid certain pathological cases while retaining theoretical richness and practical relevance. The above star graph example helps illustrate how Variant 2 can create artificial infeasibility in simple structures, potentially complicating analysis without offering practical benefits.

\section{Preliminaries}\label{sec:Pre}

In this section, we formalize the full edge-based districting problem (EBD, for short) considered in this work; the implications on a corresponding node-based districting problem are discussed following each respective proof. The goal of EBD is to select a set of center nodes and partition the edges of a given graph such that each edge is allocated to exactly one center. This assignment defines a set of districts, each centered at a designated node.

We begin by introducing the necessary parameters and decision variables, giving way to a formal mixed-integer programming (MIP) formulation of the problem. Then, to investigate the computational complexity of edge-based districting, we analyze simpler subproblems. By systematically omitting specific constraints, we aim to delineate clear boundaries between polynomial-time solvable cases and \NP\ variants. For this purpose, we group constraints according to the properties they enforce (including balance, contiguity, and compactness), and we define a corresponding notation for each subproblem variant. This approach enables a structured discussion of which constraints contribute most to computational hardness.

\subsection{Parameters, Variables, and Other Formulation Components}

To formally define the edge-based districting problem, it is first necessary to specify the key assumptions and elements of the formulation, including sets, parameters, and decision variables. These elements are summarized in Table~\ref{tab:params}.

The problem is defined on a graph \( G = (V, E) \), where \( V \) denotes the set of nodes (e.g., intersections of roads) and \( E \subseteq V \times V \) represents the set of undirected edges between nodes. Each edge \((j,k) \in E\) may represent a connection such as a road and has a parameter \( b_{(j,k)} \) reflecting length, traffic volume, or other relevant quantity. The task is to partition the edge set \( E \) into \( p \) districts, each represented with a center node. The partitioning is governed by the binary allocation variable \( x_{i,(j,k)} \), which indicates whether edge \((j,k)\) belongs to district \( i \), and the location variable \( w_{ik} \), which indicates whether node \( i \) is selected as a center node for district $k$. As a special case, we consider the \textit{unweighted} case, in which $b_{(j,k)}=1$ for any \((j,k) \in E\). 

The contiguity of each district is enforced by guaranteeing that no subset of edges $D\subseteq E$ allocated to the same district forms an isolated component disconnected from the center node. Specifically, if a subset of edges within a district is not incident to the center node ($c_i\not \in V_D$), there must exist at least one edge in the district that connects this subset to the remainder of the graph. This ensures that all edges in a district are part of a single connected component linked directly or indirectly to its center node.

The districts must also be balanced, which requires that each district contains approximately the same total edge weight. We set a lower bound on the necessary edge weight per district by $\Phi_l$ and an upper bound by $\Phi_u$. In a subsequent section, we discuss variants of this requirement, to  allow two types of deviation from the average number of edges per district, given by  \(\overline{b}\), based on a tolerance parameter \( \tau \).

Finally, the dispersion within the graph is encoded via a distance (i.e., cost)  function \( d \). The basic distance between two vertices, denoted \( d_{i,j} \), is defined as the shortest path length between vertices \( i \) and \( j \). We extend this notion to define the distance between a vertex and an edge, to quantify the distance from a center node \( i \) to an edge \( (j,k) \), which is central for measuring (and minimizing) the dispersion of the districts.

There are multiple ways to define \( d_{i,(j,k)} \). Throughout this paper, we assume that \( d \) reflects shortest-path distances and therefore satisfies the properties of a metric. Intuitively, we interpret \( d_{i,(j,k)} \) as the shortest distance from node \( i \) to edge \( (j,k) \), possibly via an intermediate edge \( (j',k') \) on a shortest path. Formally, 
\begin{equation}
  d_{i,(j,k)} = d_{i,j'} + b_{(j',k')} + d_{k',(j,k)} 
\quad \Leftrightarrow \quad 
\text{$(j',k')$ lies on a shortest path from $i$ to $(j,k)$}.\label{firsteq}  
\end{equation}

Without loss of generality, in equation \eqref{firsteq} it is assumed that $j'$ is closer to $i$ than $k'$; otherwise the indices are flipped. The symmetry of the distance is also required: for any edge \( (i,j) \), it should hold that \( d_{i,(i,j)} = d_{j,(i,j)} \). A general definition of \( d_{i,(j,k)} \) that satisfies these required properties is:
\[
d_{i,(j,k)} := \min(d_{i,j}, d_{i,k}) + \alpha\, b_{j,k},
\]
for any \( \alpha \in [0,1] \). A common choice is \( \alpha = 0 \), as used by Kassem \& Escobedo~\cite{kassem2023models}, which corresponds to the smaller of two node-to-node distance values (i.e., between the center node and each endpoint of $(j,k)$). Alternatively, setting \( \alpha = 0.5 \) might be more intuitive, to reflect that the service location is, on average, at the midpoint of the edge \citep{kal15dis}.

\begin{table}[h]
\centering
\caption{Model Parameters and Decision Variables}
\begin{tabular}{|l|l|p{9cm}|}
\hline
\textbf{Symbol} & \textbf{Type} & \textbf{Description} \\
\hline
$V$ & Set & Set of nodes (e.g., districts) \\
$E \subseteq V \times V$ & Set & Set of edges (undirected connections between nodes) \\
$V_D $ & Set & Set of vertices incident to at least one edge in \( D\subseteq E \), that is, \(V_D:=\{ v \in V \mid \exists (v,u) \in D \text{ or } (u,v) \in D \} \) \\
$\delta(j)$ & Set & Set of edges incident to node $j\in V$ \\
$p\leq |E|$ & Parameter & Total number of districts \\
$\Phi_l$ & Parameter & Minimal number/weight of edges required per district \\
$\Phi_u$ & Parameter & Maximal number/weight of edges required per district \\
$b_{(j,k)}$ & Parameter & Weight of edge $(j,k)$ \\
$d_{v,(j,k)}$ & Parameter & Distance (i.e., cost) between edge $(j,k)$ and vertex $v\in V$ \\

\hline
$x_{i,(j,k)} \in \{0,1\}$ & Decision Variable & 1 if edge $(j,k)$ is allocated to district $i$; 0 otherwise \\
$w_{ik} \in \{0,1\}$ & Decision Variable & 1 if node $k$ is center node of district $i$; 0 otherwise \\
\hline
\end{tabular}
\label{tab:params}
\end{table}

\textbf{Exact Mathematical Formulation}

We now present a (quadratic) binary programming formulation for edge-based districting (BP-EBD). The model incorporates the three before mentioned basic properties: balance, contiguity, and compactness. BP-EBD differs slightly from the original EBD model introduced by Kassem \& Escobedo~\citep{kassem2023models}, based on certain key differing assumptions elaborated later in this section. This includes the presented objective function expression, which can nonetheless be linearized by introducing additional variable indices and constraints (see the proof of Theorem \ref{LPTheorem}). However, that formulation is more cumbersome and, thus, the presented version is maintained in favor of simplicity. 

To facilitate a modular understanding of the problem's complexity, we group the constraints according to the property they enforce. Multiple constraints within each group are then indexed hierarchically. For example, $S_{1}$ and $S_{2}$ are \textit{structural requirements}. This format supports the understanding and notation of the subcases and allows us to define and analyze meaningful subproblems by selectively omitting constraints corresponding to specific design criteria/requirements. Each group of constraints is labeled with a prefix indicating the associated property:

\begin{itemize}
    \item[(S)] Structural requirements are necessary constraints for any meaningful problem.
    \item[(B)] Balance requires that each district should be allocated an approximately equal share of the total edge weight, where the allowable deviations are specified via a tolerance parameter \( \tau \).
    \item[(C)] Contiguity requires that each district should form a connected subgraph, thereby preventing allocated units to be isolated from others in the same partition (and from their center).
    \item[(I)] Integrity enforces that each edge should be allocated to exactly one district.
    \item[(N)] Node center selection indicates whether the centers are pre-selected or chosen as part of the solution process. 
    \item[(O)] Satisfying the Objective of compactness requires that the cumulative distance from the edges to their assigned centers should be as small as possible, so as to promote geographically tight districts.
    \item[(W)] Weightiness enables each edge to have arbitrary positive weights, generalizing the unweighted case where all edges have unit weight.
\end{itemize}

The complete formulation is given by: 

\[ \min \sum_{v\in V}\sum_{i=1}^p\sum_{(j,k)\in E} w_{iv}\,d_{v,(j,k)}\,x_{i,(j,k)}\tag{$O_1$} \label{O1}\]
\begin{align}
    \sum_{i =1}^p x_{i,(j,k)} &= 1 & \forall (j,k) \in E  \tag{$S_1$} \label{S1} \\
    \sum_{v \in V} w_{iv} &= 1  & \forall i\in \{1,2,...p\}  \tag{$S_2$}\label{S2} \\
    \sum_{(j,k) \in E} b_{(j,k)} x_{i,(j,k)} &\leq  \Phi_{u} & \forall i \in \{1,2,...p\}  \tag{$B_1$}\label{B1} \\
    \sum_{(j,k) \in E} b_{(j,k)} x_{i,(j,k)} &\geq \Phi_{l}  & \forall i \in \{1,2,...p\}  \tag{$B_2$}\label{B2} \\
    \sum_{(u,v) \in \delta(V_D)} x_{i,(u,v)}+ \sum_{e \in D} (1 - x_{i,e})+\sum_{v\in V_D} w_{iv}   &\geq 1 & \forall D \subseteq E,\ \forall i \in \{1,\dots,p\}\tag{$C_1$}\label{c1} \\
    x_{i,(j,k)} &\in \mathbb{B} & \forall i\leq p, \forall (j,k) \in E  \tag{$I_1$} \label{A2}\\
    w_{ik} &\in \mathbb{B} & \forall i \leq p, \forall k \in V  \tag{$N_1$}\label{N1} 
\end{align}

More detailed explanations of the constraint groups and associated model components are given in the corresponding paragraphs ($S$, $B$, $C$, $I$, $N$, $O$, $W$). 

To improve readability in the remainder of this paper, we introduce the shorthand notation $c_i \in V$ to denote the center node representing district $i$. Formally, this means that $w_{i c_i} = 1$ and $w_{ik} = 0$ for all $k \neq c_i$. While the selection of centers is fully encoded by the binary variables $w_{ik}$ in BP-EBD, using $c_i$ allows for more concise expressions in proofs and discussions. Unless explicitly stated otherwise, references to $c_i$ should be understood as referring to the unique node $v \in V$ such that $w_{iv} = 1$.

\subsection{Groupings of Districting Design Requirements}

This paper investigates \emph{why} and \emph{when} the districting problem becomes computationally hard. While districting is known to be \NP\ in general (see ~\cite{kassem2023models}), we study what happens when specific criteria are removed (i.e., constraints are relaxed). Our goal is to precisely identify the boundary between tractable instances (in \( \mathcal{P} \)) and intractable ones (\NP). To this end, we define a family of \emph{subproblems} by omitting selected criteria groups, as previously categorized. The full model is denoted by \( \theta_{S,B,C,I,N,O,W} \), indicating that all design criteria groups are simultaneously imposed. Since the structural requirements (\( S \)) are essential for a well-defined model, we assume they are always included and henceforth omit them from our notation for brevity; thus, we refer to the full model as \( \theta_{B,C,I,N,O,W} \) and consider any meaningful subproblems resulting from omitting constraint groups.

In the following, we discuss the implications of relaxing the full model and solving the resulting subproblem. In some cases, constraints can be dropped without replacement; in others, a suitable alternative must be introduced (e.g., if removing the objective, a proxy criterion may be required to preserve meaning). We begin with the structural constraints, showing that omitting them either trivializes the problem or leads to formulations that are not meaningful or well-posed. Moreover, we discuss why some constraints only make sense when combined with other groups and, conversely, why certain  combinations are not meaningful.

\subsection*{(S) Structural}

Constraint \ref{S1} enforces that every edge must be allocated a district. This is a fundamental requirement, as omitting it renders the problem meaningless: setting all \( x_{i,(j,k)} = 0 \) yields a trivial solution with an objective value of zero, which is not informative or useful.

Constraint \ref{S2} enforces that each district selects exactly one center node as its representative. Together with other constraints, it ensures that exactly \( p \) center nodes are opened. Without this restriction, the model could theoretically assign each edge to its own district with a unique center (e.g., one of its incident nodes), leading to \( |E| \) centers. This would trivially satisfy balance and contiguity, and minimize the objective to zero—again, defeating the purpose of meaningful districting.

These structural constraints are essential for defining the problem correctly and are therefore always included. Hence, we omit the symbol \( S \) when listing constraint groups with the featured notation, thereby assuming their inclusion by default.

\subsection*{(B) Balance}

Constraints \ref{B1} and \ref{B2} enforce balance of each district. Specifically, \ref{B1} ensures that no district exceeds the allowed upper bound, while \ref{B2} guarantees a minimum summed weight for each district. If balance is not a requirement, both constraints can be omitted without the need of replacements. The resulting model remains feasible but may produce districts that differ significantly in size and weight.

Let the average district weight be denoted as 
\[
\overline{b} = \frac{\sum_{(j,k) \in E} b_{(j,k)}}{p}.
\]
It is clear that if the lower bound \(\Phi_\ell > \overline{b}\) or the upper bound \(\Phi_u < \overline{b}\), then no feasible solution exists. Hence, we always assume \(\Phi_\ell \leq \overline{b}\leq \Phi_u\).

A common approach is to model balance using a tolerance parameter \(\tau\), which specifies how much a district may deviate from the average. We feature two ways of imposing balance:

\begin{itemize}
    \item[(a)] \textbf{Additive Tolerance:} Replace the right-hand side of \ref{B1} with \( (1+\tau)\overline{b} \), and the right-hand side of \ref{B2} with \( (1-\tau)\overline{b} \),  where $\tau\geq 0$.
    \item[(b)] \textbf{Multiplicative Tolerance:} Replace the right-hand side of \ref{B1} with \( \frac{\overline{b}}{\tau} \), and the right-hand side of \ref{B2} with \( \tau\cdot \overline{b} \), where $\tau\in (0,1]$.
\end{itemize}

Under variant (a), perfect balance is achieved with \(\tau = 0\), and under variant (b), it is achieved with \(\tau = 1\). The additive form is more widely used in districting literature \citep{rios2020research}, as it is easier to interpret and tune. However, it may come with practical drawbacks. Namely, for regions with large variation in district sizes, the setting \(\tau > 1\) may be needed to ensure feasibility, which implies tolerating districts more than twice the average size and possibly a negative lower bound.

The multiplicative form, while less common, has the advantage of always producing non-negative bounds. However, since \(\tau\) appears in the denominator for the upper bound, it cannot be set to zero, and thus each district must have positive total weight.  In other words, we must ensure that every district contains at least one edge (which becomes relevant in Theorem~\ref{lem:B}).

Because each balance variant comes with trade-offs, we opt for a general formulation using parameters \(\Phi_\ell\) and \(\Phi_u\), which can flexibly model either case. These parameters can be derived from any \(\tau\), additive or multiplicative. In our \NP{}ness proofs, we only rely on the ability to enforce that each district contains exactly \(\overline{b}\) edges which is a property supported by both balance requirement variants.

\subsection*{(C) Contiguity}

This constraint is only meaningful when allocations are integral. Otherwise, it is possible to allocate any edge to any district at $\epsilon$ value, fulfilling the contiguity constraint without increasing the objective.

To enforce contiguity in edge-based districting, it is necessary to ensure that for any group of edges \( D \subseteq E \) forming an isolated component, if all edges in \( D \) are assigned to district \( i \), and the center node \( c_i \notin V_D \) (i.e., the center is not part of that component), then at least one edge in district \( i \) must connect \( D \) to the remainder of the graph.

We use constraint \ref{c1}, which is given by:

\begin{equation*}
 \sum_{(u,v) \in \delta(V_D)} x_{i,(u,v)}+ \sum_{(u,v) \in D} (1 - x_{i,(u,v)})+\sum_{v\in V_D} w_{iv}   \geq 1 \quad \forall D \subseteq E,\ \forall i \in \{1,\dots,p\}
\end{equation*}

Notice that each of the three terms is non-negative. Hence, the formula excludes the case where all sums are equal to zero, which occurs exclusively when three conditions are simultaneously met: \begin{itemize}
    \item \textit{No} edge connects \( D \) to the remainder of the graph (otherwise, $x_{i,(u,v)}=1$ for some edge $(u,v)$).
    \item \textit{All} edges in \( D \) are allocated to district $i$ (otherwise, $(1 - x_{i,(u,v)})=1$ for some edge $(u,v)$).
    \item \textit{No} vertex incident to one edge in $D$ is the center node of $i$ (otherwise some $w_{iv}=1$).  
\end{itemize}

In practice, one would not enumerate all subsets \( D \subseteq E \), as there is an exponential number of possibilities. Instead, such cut constraints are typically added lazily during a branch-and-cut procedure. An alternative is to model contiguity via flow-based connectivity models, but these tend to perform poorly computationally \citep{val22imp} and are thus not widely utilized in practice.

\subsection*{(I) Integrity}

Constraint \ref{A2} ensures that each edge is assigned exclusively to a single district. An alternative relaxed version would allow fractional assignment of edges to multiple districts. In that case, the binary constraint is replaced as:

\[
x_{i,(j,k)} \geq 0 \quad \forall i \in V,\; \forall (j,k) \in E. \tag{$I_1'$}
\]

Note that, due to Constraint \ref{S1}, the upper bound \( x_{i,(j,k)} \leq 1 \) is already implicitly enforced and need not be stated explicitly.

\subsection*{(N) Node Center Selection}

Our model allows multiple districts to have the same node as their center. This differs from the original formulation by Kassem \& Escobedo~\cite{kassem2023models} and is justified for two key reasons. First, enforcing uniqueness of center nodes can artificially increase problem difficulty. In particular, if an arbitrary number of centers can be opened while not considering balance or continuity, the problem should be easy; however this is not true if the centers are required to be unique. Moreover, enabling one node to serve as the center of multiple districts generalizes the model and avoids pathological cases. Second, in practical applications such as locating polling places, postboxes, or facilities, multiple services can operate from the same location without conflict. However, it is admittedly rare that two (or more) optimal centers share a node in practical applications. In all, this modification contributes more to theoretical bounds, but it can at times yield unrealistic solutions.

We also consider the setting without node center selection, such that the set of center nodes $\{c_1,c_2,...,c_p\}= C$ are given in advance, in which case the variables \( w_{ik} \) become parameters. The constraint \ref{N1}, which is related to center selection, can be omitted in this setting. 

\subsection*{(O) Compactness}

The objective function \ref{O1} minimizes the total distances of assigned edges to their respective center nodes, promoting compact district shapes. In certain problem variants, one may be interested only in the feasibility of a solution—whether a valid districting exists—regardless of compactness. In this case, the objective can be dropped and replaced with a dummy objective such as:
\[
\min\; 0.  \tag{$O_1'$}
\]

This effectively transforms the problem into a pure feasibility check, where the goal is to determine whether a solution exists that satisfies all remaining constraints.

This change of objective is only meaningful in the scenario of contiguity. In all other cases, greedily assign the edges in arbitrary fashion to centers and may check for balance reasons if the number of edges is divisible by $p$. Hence, we only consider dropping this criterion when contiguity is imposed.

\subsection*{(W) Weightedness}

In the weighted setting, each edge is assigned a positive real weight \( b \in \mathbb{R}^+ \). We refer to this setting as the \emph{weighted case}, denoted by \( \theta_W \). Additionally, we consider the \emph{unweighted case}, in which each edge \((j,k)\) is assigned a unit weight, i.e., \( b_{(j,k)} = 1 \) for all edges.

\subsection{Extending the Results to Node-Based Districting}

As discussed in Section~\ref{sec:Re}, we can consider node-based variants of the featured class of districting problems. In this case, nodes (rather than edges) are assigned to districts, but connectivity is still enforced through an edge-based approach. This hybrid model, which we refer to as \emph{edge-based connectivity for assignments}, introduces two sets of binary decision variables:

\begin{itemize}
    \item \( x^\textbf{V}_{i,j} \in \{0,1\} \): Indicates whether node \( j \in V \) is allocated to district \( i \in \{1, \dots, p\} \).
    \item \( x^\textbf{E}_{i,(j,k)} \in \{0,1\} \): Indicates whether edge \( (j,k) \in E \) is allocated to district \( i \) (to enforce connectivity).
\end{itemize}

We emphasize that the superscripts \textbf{V} and \textbf{E} in the above notation are used to distinguish those variables that deal with allocating vertices and those that deal with allocating edges. The cost function is updated accordingly, specifically by excluding costs associated with edges. Let \( d_{i,j} \in \mathbb{R}_{\geq 0} \) represent the distance between node \( j \) and district \( i \). This replaces the edge-based cost terms in the original formulation.

To adapt BP-EBD to the node-based setting:
\begin{itemize}
    \item In all constraints and in the objective that involve assignments (e.g., \eqref{O1}, \eqref{S1}, \eqref{B1}, \eqref{B2}, \eqref{A2}), replace the edge assignment variable \( x_{i,(j,k)} \) with \( x^\textbf{V}_{i,j} \), and replace the corresponding cost coefficients \( d_{i,(j,k)} \) with the node-based values \( d_{i,j} \).
    \item Connectivity constraints are modeled separately using the edge variables \( x^e \), as described next.
\end{itemize}

\paragraph{Connectivity Enforcement}  
Because we consider edge-based connectivity for node assignments, it is necessary to allocate edges to districts to ensure contiguity. Hence we have two kind of $x$ variables and use the edge-based variables \( x^\textbf{E} \) to enforce that the nodes assigned to each district form a connected component.

The key idea is to require that, if all nodes in a subset \( S \subseteq V \) are assigned to the same district \( i \), and the center node \( c_i \notin S \), then some edge assigned to district \( i \) must connect \( S \) to the remainder of the graph. As in the connectivity section, we define:

\begin{itemize}
    \item \( S \subset V \): a subset of nodes that does not contain the center node \( c_i \).
    \item \( \delta(S) := \{ (u,v) \in E \mid u \in S,\ v \notin S \} \): the set of edges crossing from \( S \) to \( V \setminus S \).
\end{itemize}

We can then replace \ref{c1} with the constraint,
\begin{equation}
\sum_{(u,v) \in \delta(S)} x^\textbf{E}_{i,(u,v)} + \sum_{v \in S} (1 - x^\textbf{V}_{i,v})+\sum_{v\in S} w_{iv} \geq 1
\quad \forall S \subset V\  \forall i \in \{1,\dots,p\}.
\label{con:node-cut}
\end{equation}

This inequality becomes active only if all nodes in \( S \) are assigned to district \( i \) (i.e., \( x^\textbf{V}_{i,v} = 1 \) for all \( v \in S \)) and $c_i \not \in S$ (otherwise $\sum_{v\in S} w_{iv}>0$). In that case, the left-hand side forces at least one edge on the cut between \( S \) and \( V \setminus S \) to be included in district \( i \), thus ensuring that \( S \) is connected to the rest of the district (specifically to the center \( c_i \)).

As with the edge-based model, the number of such constraints is exponential in \( |V| \) and, thus, are typically handled using lazy constraint generation in large-scale instances or replaced with alternative flow-based formulations. Our theoretical results are developed under the edge-based formulation. However, the key arguments can be adapted to this node-based setting.

\section{Subproblem Complexity Classification}\label{sec:Result}

The main contribution of this paper is summarized in Table~\ref{table}, which classifies each variant of the districting problem by its computational complexity—either polynomial-time solvable or \NP. Interestingly, with the exception of contiguity, each individual constraint can tip the complexity of a problem from polynomial-time solvable to \NP. Table \ref{table2} highlights, for each constraint, an example of a problem that is polynomial solvable, but becomes \NP\ upon the addition of that particular constraint. Another key finding is that the combination of node center selection and compactness leads to an \NP\ problem.

\begin{table}[ht]
\centering
\begin{tabular}{ll}
\toprule
Variant solvable in polynomial time& \NP\ superset \\
\midrule
$\theta_{I,O,W}$   & $\theta_{\textcolor{red}{B},I,O,W}$ \\
$\theta_{B,I,O}$   & $\theta_{B,\textcolor{red}{C},I,O}$  \\
$\theta_{B,O,W}$   & $\theta_{B,\textcolor{red}{I},O,W}$ \\
$\theta_{O}$       & $\theta_{\textcolor{red}{N},O}$      \\
$\theta_{C,I,N,W}$ & $\theta_{C,I,N,\textcolor{red}{O},W}$\\
$\theta_{B,I,O}$   & $\theta_{B,I,O,\textcolor{red}{W}}$  \\
\bottomrule
\end{tabular}
\caption{Any additional constraint can make a polynomial time solvable subproblem \NP.}\label{table2}

\end{table}

Proofs for the entries in Table~\ref{table} are either given directly or need additionally a reduction between two districting variants. For instance, Theorem~\ref{lem:AWCO} establishes that $\theta_{I,O,W}\in \mathcal{P}$. By reduction~\ref{lem:W}, removing the weight constraint does not make the problem harder, so $\theta_{I,O}\in \mathcal{P}$ as well. We refer to these proofs in the corresponding  $\theta_{I,O}$ part in the table as (Theorems) [\ref{lem:AWCO} \& \ref{lem:W}].

\begin{table}[h]
\centering
\caption{All 32 meaningful combinations of design criteria \(B,C,I,N,O,W\); P refers to polynomial-time, and H refers to \NP.}
\label{table}
\renewcommand{\arraystretch}{1.2}
\small
\begin{tabular}{c|cccc}
\toprule
\textbf{\(B, N, W \setminus C, I, O\)} & O & IO & IC & CIO \\
\midrule
--   & P [\ref{WBCO}] & P [\ref{lem:AWCO} \& \ref{lem:W}] & P [\ref{lem:AWCO} \& \ref{lem:W} \& \ref{lem:O}] & P [\ref{lem:AWCO} \& \ref{lem:W}] \\
W    & P [\ref{WBCO}] & P [\ref{lem:AWCO}] & P [\ref{lem:AWCO} \& \ref{lem:O}] & P [\ref{lem:AWCO}]\\
B    & P [\ref{WBCO}]& P [\ref{lem:ABO}] & H [\ref{lem:WABC}] & H [\ref{lem:WABC} \& \ref{lem:O}] \\
BW   & P [\ref{WBCO}] & H [\ref{lem:WABC}] & H [\ref{lem:WABC} \& \ref{lem:W}] & H [\ref{lem:WABC} \& \ref{lem:O}] \\
N    & H [\ref{lem:ANO}] & H [\ref{lem:ANO}] & P [\ref{lem:WNAC} \& \ref{lem:W}] & H [\ref{lem:ANO}]\\
NW   & H [\ref{lem:ANO} \& \ref{lem:W}] & H [\ref{lem:ANO} \& \ref{lem:W}] & P [\ref{lem:WNAC}] & H [\ref{lem:ANO} \& \ref{lem:W}]  \\
BN   & H [\ref{lem:ANO} \& \ref{lem:B}] & H [\ref{lem:ANO} \& \ref{lem:B}] & H [\ref{lem:WABC}] & H [\ref{lem:WABC} \& \ref{lem:O}] \\
BNW  & H [\ref{lem:ANO} \& \ref{lem:W} \& \ref{lem:B}] & H [\ref{lem:ANO} \& \ref{lem:W} \& \ref{lem:B}] & H [\ref{lem:WABC} \& \ref{lem:W}] & H [\ref{lem:WABC} \& \ref{lem:O} \&\ref{lem:W}] \\
\bottomrule
\end{tabular}
\end{table}

The rest of this section is organized as follows. Section \ref{sec:P} demonstrates polynomial time solvability of certain subproblems. Then, Section \ref{sec:np} demonstrates \NP ness for others. Finally, Section \ref{sec:Reduc} covers the complexity of subproblems that can be determined based on the established complexity results of the first two subsections. 

\subsection{Subproblems with Polynomial Time Complexity}\label{sec:P}

\begin{theo}
Let $\mathcal{H}=\{B,O,W\}$ be a set of design criteria, denoting its  power set as $\mathcal{P}(\mathcal{H})$. For any $X\in \mathcal{P}(\mathcal{H})$, the induced problem  $\theta_{X}$ is solvable in polynomial time.\label{WBCO}
\end{theo}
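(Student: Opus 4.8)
The plan is to show that every member of this family collapses to a linear program of polynomial size, so that polynomial-time solvability follows from the standard polynomial-time algorithms for linear programming (e.g.\ the ellipsoid or interior-point methods). The whole content of the argument is to verify that, for $X\subseteq\{B,O,W\}$, none of the omitted criteria can introduce either a nonlinearity or an exponential number of constraints; once this is checked, the conclusion is immediate and uniform across all eight subsets $X\in\mathcal{P}(\mathcal{H})$.

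The key step I would carry out first is to exploit that $N\notin X$ for every such $X$. This means the centers $c_1,\dots,c_p$ are fixed in advance and the variables $w_{iv}$ are constants. Consequently the objective \eqref{O1}, which in the full model contains the bilinear product $w_{iv}\,x_{i,(j,k)}$, reduces to the purely linear expression $\sum_{i=1}^{p}\sum_{(j,k)\in E} d_{c_i,(j,k)}\,x_{i,(j,k)}$. This linearization is the crux: fixing the centers removes the only source of nonconvexity in the formulation.

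Next I would observe that $I\notin X$, so integrality is relaxed to $I_1'$, i.e.\ $x_{i,(j,k)}\ge 0$; combined with \eqref{S1} this already forces $0\le x_{i,(j,k)}\le 1$ without an explicit upper bound. Since $C\notin X$, the exponentially many contiguity cuts \eqref{c1} are absent. The only remaining constraints are the $|E|$ structural equalities \eqref{S1} and, when $B\in X$, the $2p$ balance inequalities \eqref{B1} and \eqref{B2}; all are linear, and the flag $W$ merely sets the coefficients $b_{(j,k)}$ to arbitrary positive reals instead of $1$, which changes neither linearity nor the size of the program. Hence $\theta_X$ is a linear program in $p|E|$ variables with $O(|E|+p)$ constraints, and whether the objective is the linear compactness term (for $O\in X$) or the dummy objective $O_1'$, it is solvable in polynomial time.

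For the feasibility-only variants (those with $O\notin X$) I would additionally supply an explicit witness rather than invoking an LP solver: the uniform fractional assignment $x_{i,(j,k)}=1/p$ satisfies \eqref{S1} and gives every district total weight exactly $\overline{b}$, so balance holds under the standing assumption $\Phi_\ell\le\overline{b}\le\Phi_u$; when $O\in X$ but $B\notin X$ the optimum is obtained even more elementarily by sending each edge to its nearest center. The main—and essentially the only—obstacle is the first step: recognizing that dropping $N$ is precisely what eliminates the bilinear term and turns the formulation into a tractable LP. After that observation, polynomial-time solvability is a direct consequence of linear-programming theory.
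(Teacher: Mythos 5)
Your proposal is correct, but it takes a genuinely different (and more elementary) route than the paper's own proof. The paper does \emph{not} fix the centers: it keeps the location variables $w_{iv}$ in the model, relaxes them to $[0,1]$ (reading the omission of $N$ as dropping the binarity requirement $(N_1)$ rather than as pre-selecting the centers), and then eliminates the bilinear objective term $w_{iv}\,d_{v,(j,k)}\,x_{i,(j,k)}$ by introducing triple-indexed variables $y_{v,i,(j,k)}$ with linking constraints $y_{v,i,(j,k)}\le w_{iv}$ and $\sum_{v\in V}\sum_{i=1}^{p}y_{v,i,(j,k)}=1$, rewriting all constraint groups in terms of $y$; the result is a single polynomial-size LP that covers all eight subsets uniformly. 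You instead take the omission of $N$ to mean that $c_1,\dots,c_p$ are given parameters---which is in fact how the paper verbally describes the ``without node center selection'' setting and how it treats $N$-free variants elsewhere (the greedy argument for $\theta_{I,O,W}$ and the rounding argument for $\theta_{B,I,O}$ both say the centers are fixed)---so the objective is linear in $x$ from the outset and no auxiliary variables are needed. Your explicit witnesses are a pleasant bonus the paper omits: the uniform assignment $x_{i,(j,k)}=1/p$ gives every district weight exactly $\overline{b}$, which is feasible precisely because of the standing assumption $\Phi_\ell\le\overline{b}\le\Phi_u$, and nearest-center assignment settles the $O\in X$, $B\notin X$ cases. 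The one caveat worth recording: your crux step (``the $w_{iv}$ are constants'') only covers the fixed-center reading of dropping $N$; the paper's LP additionally handles the reading in which center selection remains part of the solution but with $w_{iv}\in[0,1]$ subject to $(S_2)$, where the bilinearity is real and the $y$-linearization is genuinely needed. Since both readings yield polynomial-size linear programs, the theorem holds either way, and your proof is valid for the semantics the rest of the paper actually uses.
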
\label{LPTheorem}

Proof: We linearize the components of BP-EBD and give an equivalent formulation. To do so, it is necessary to designate an additional center node index with the allocation decisions. The resulting variable, denoted as $y$, is given by:
\[y_{v,i,(j,k)} = \begin{cases} 1 \text{ if edge }(j,k) \text{ is assigned to center node } v \text{ and district $i$;}\\ 
\text{0 otherwise.}
\end{cases}\]
The resulting formulation is given by: 

\[ \min \sum_{v\in V}\sum_{i=1}^p\sum_{(j,k)\in E} d_{v,(j,k)}\,y_{v,i,(j,k)} \]
\begin{align*}
    \sum_{v\in V}\sum_{i=1}^p y_{v,i,(j,k)} &= 1 & \forall (j,k) \in E \\
    \sum_{v \in V} w_{iv} &= 1  & \forall i\in \{1,2,...p\}   \\
    y_{v,i,(j,k)} &\leq w_{v,i} & \forall (j,k) \in E, i\in \{1,2,...p\}, v\in V   \\
    \sum_{(j,k) \in E}\sum_{v\in V} b_{(j,k)} y_{v,i,(j,k)} &\leq  \Phi_{u} & \forall i \in \{1,2,...p\}   \\
    \sum_{(j,k) \in E}\sum_{v\in V} b_{(j,k)} y_{v,i,(j,k)} &\geq \Phi_{l}  & \forall i \in \{1,2,...p\}   \\
    \sum_{(u,v) \in \delta(V_D)}\sum_{v\in V} y_{v,i,(u,v)}+ \sum_{e \in D} \left(1 - \sum_{v\in V} y_{v,i,e}\right)+\sum_{v\in V_D} w_{iv}   &\geq 1 & \forall D \subseteq E,\  i \in \{1,\dots,p\} \\
    y_{v,i,(j,k)} &\in [0,1] & \forall i\leq p, (j,k) \in E  \\
    w_{ik} &\in [0,1] & \forall i \leq p, k \in V, v\in V  
\end{align*}

The allocation variables, encapsulated as the vector $y$, serve to remove the previous multiplication of decision variables in objective \eqref{O1}. The constraint $y_{v,i,(j,k)} \leq w_{v,i}$ is added to  impose the requisite logic, namely, that an allocation to center node $v$ district $i$ can only be done when the respective location variable is activated (previously enforced through the multiplication of the associated binary variables). All other constraints simply reflect the change in allocation variables. Finally, note that the domain of variables $y$ and $w$ is relaxed, since the Integrity (I) and Node Center Selection (N) requirements are excluded from $\mathcal{H}$. This makes the above   formulation a linear program and, therefore, in \Pclass~. \qed

\begin{theo}
    The variants $\theta_{C,I,O,W}$ and $\theta_{I,O,W}$ are in \Pclass. \label{lem:AWCO}
\end{theo}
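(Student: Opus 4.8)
The plan is to observe that in both variants the centers are fixed in advance (since $N$ is absent from $\{C,I,O,W\}$) and no balance constraint couples the districts, so the assignment separates over the edges. I would first dispatch $\theta_{I,O,W}$, where only $S_1$, integrality, and the objective \eqref{O1} are active with given centers $c_1,\dots,c_p$. Because the cost of placing edge $(j,k)$ in district $i$ equals the fixed parameter $d_{c_i,(j,k)}=\min(d_{c_i,j},d_{c_i,k})+\alpha\, b_{(j,k)}$, each edge may be assigned independently to a center attaining $\min_i d_{c_i,(j,k)}$. Running $p$ single-source shortest-path computations to obtain all the $d_{c_i,\cdot}$ values and then taking a per-edge minimum over the $p$ centers yields an optimal integral assignment in polynomial time, so $\theta_{I,O,W}$ lies in \Pclass.

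For $\theta_{C,I,O,W}$ I would reintroduce the contiguity cuts \ref{c1} and argue that the very same nearest-center assignment is already feasible for them, and hence still optimal. The quantity $\sum_{(j,k)\in E}\min_i d_{c_i,(j,k)}$ lower-bounds the objective of every feasible solution of $\theta_{C,I,O,W}$, and the greedy assignment attains it, so establishing feasibility immediately forces optimality. The crux is therefore a shortest-path Voronoi lemma. At the vertex level: if $v$ is assigned to $c_i$ and $u$ lies on a shortest $c_i$--$v$ path, then $u$ is also assigned to $c_i$, since $d_{c_i,v}=d_{c_i,u}+d_{u,v}$ and any strictly cheaper center $c_{i'}$ for $u$ would give $d_{c_{i'},v}\le d_{c_{i'},u}+d_{u,v}<d_{c_i,v}$, contradicting the choice made for $v$. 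Lifting this to edges, for an edge $(j,k)$ assigned to $c_i$ with nearer endpoint $j$, I would show that every edge along a shortest $c_i$--$j$ path is also assigned to $c_i$ (the nearer endpoint of each such edge is the one closer to $c_i$, and both of its endpoints prefer $c_i$ by the vertex lemma); these edges form a monochromatic $c_i$--$j$ path, so each district is connected and contains its center, which is exactly what \ref{c1} demands.

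The step I expect to be the main obstacle is handling ties in the nearest-center rule, which if left unmanaged could sever connectivity---an edge might pick a center different from the one selected along its shortest-path predecessors. I would resolve this with a consistent lexicographic tie-break, comparing the pairs $(d_{c_i,\cdot},\,i)$ and favoring the smaller index, equivalently an infinitesimal symbolic perturbation that makes every vertex-to-center and edge-to-center cost distinct; then I would re-verify that both the vertex lemma and its edge lifting survive verbatim under this comparison. Two minor points remain: the degenerate case $\alpha=0$, where edge costs collapse onto node distances and ties proliferate but are still covered by the same scheme, and the status of empty districts, which satisfy \ref{c1} vacuously, so that the greedy solution meets the constraint for every $D\subseteq E$. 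Together these steps establish that $\theta_{C,I,O,W}$ is in \Pclass.
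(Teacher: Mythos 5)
Your proposal is correct and takes essentially the same route as the paper's proof: greedily assign each edge to its nearest center with a consistent smallest-index tie-break, then show via a shortest-path argument that this Voronoi-style assignment is automatically contiguous, so the unconstrained optimum is feasible---and hence optimal---for $\theta_{C,I,O,W}$. The paper argues directly at the edge level using the decomposition $d_{c_k,e}=d_{c_k,e'}+b_{e'}+d_{e',e}$ rather than factoring through your vertex-level lemma and its lifting (where the one step you leave implicit---that an edge assigned to $c_i$ has its nearer endpoint vertex-assigned to $c_i$---does hold under the lexicographic rule, since $\min(d_{c_{i'},j},d_{c_{i'},k})\leq d_{c_{i'},j}$ would otherwise give the edge a better center), but this is a difference of bookkeeping, not of substance.
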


Since the center nodes are fixed in $\theta_{I,O,W}$, it is optimal to greedily assign each edge (or vertex, in the case of node-based districting) to its closest center node. We define an arbitrary but consistent tie-breaking rule based on center indices: for edges equidistant to multiple centers, assign them to the center with the smallest index. We now claim that this assignment also satisfies the contiguity requirement, meaning that each district forms a connected subgraph and therefore solves $\theta_{C,I,O,W}$.

Assume, for contradiction, that the resulting assignment does \emph{not} yield connected districts. Then, there exists an edge $e$ assigned to district $k$, such that on \emph{any} direct path from $e$ to its assigned center $c_k$, there exists an intermediate edge $e'$ which is assigned to a different district, say $k'$. By \eqref{firsteq}, we can write the distance between $c_k\in C$ and $e$ as $d_{c_k,e}=d_{c_k,e'}+b_{e'}+d_{e',e}$.

Since $e'$ is assigned to $k'$, we must have that $d_{c_{k'},e'}< d_{c_k,e'} $ or $d_{c_{k'},e'}= d_{c_k,e'} $; with ${k'} < k$ satisfies the tie-breaking rule. Next, the distance from $e$ to $c_{k'}\in C$ is calculated as:
\[ d_{c_{k'},e} \]
\[ \leq  d_{c_{k'},e'}+b_{e'}+d_{e',e} \]
\[ = d_{c_k,e}-d_{c_k,e'}+ d_{c_{k'},e'}\]
\[ \leq d_{c_k,e}, \]
which holds at equality only if ${k'} < k$. Thus, either $d_{c_{k'},e}<d_{c_k,e}$ or $d_{c_{k'},e}=d_{c_k,e}$, with ${k'}<k$. Both cases contradict the assumption that $e$ is assigned to $c_k$.

Hence, the greedy assignment ensures that each edge is connected to its center via edges in the same district, and the resulting solution is contiguous. Since all steps are polynomial in $|E|$, the problem $\theta_{C,I,O,W}$ can be solved in polynomial time.
\qed

\begin{theo}
    The variant $\theta_{B,I,O}$ is in \Pclass. \label{lem:ABO}
\end{theo}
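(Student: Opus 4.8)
The plan is to recognize $\theta_{B,I,O}$ as a minimum-cost transportation (equivalently, min-cost flow) problem and to exploit the integrality of network flows so that the integrality requirement $I$ is discharged for free. Since the subscript contains neither $N$ nor $W$ nor $C$, in this variant the centers $c_1,\dots,c_p$ are fixed in advance, every edge has unit weight, and no contiguity is imposed. After substituting the fixed $w_{iv}$ into \eqref{O1}, the task reduces to assigning each edge $e\in E$ to exactly one of the $p$ given districts so that each district receives between $\Phi_l$ and $\Phi_u$ edges, minimizing $\sum_{i=1}^p\sum_{e\in E} d_{c_i,e}\,x_{i,e}$.

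First I would build a bipartite flow network. Introduce a source $s$, one node $u_e$ for each edge $e\in E$, one node $v_i$ for each district $i$, and a sink $t$. Add arcs $s\to u_e$ of capacity $1$ and cost $0$; arcs $u_e\to v_i$ of capacity $1$ and cost $d_{c_i,e}$; and arcs $v_i\to t$ with lower bound $\Phi_l$, upper bound $\Phi_u$, and cost $0$. Routing exactly $|E|$ units of flow from $s$ to $t$ then corresponds bijectively to feasible balanced edge assignments, and the total cost equals the objective \eqref{O1}. Because all edges carry unit weight, the balance bounds may be replaced by the integers $\lceil\Phi_l\rceil$ and $\lfloor\Phi_u\rfloor$ without changing the feasible set, since each district's edge count is necessarily an integer.

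Next I would establish feasibility and integrality. The standing assumption $\Phi_l\le\overline{b}\le\Phi_u$ gives $p\,\Phi_l\le |E|\le p\,\Phi_u$, which is exactly the condition for a feasible $s$--$t$ flow of value $|E|$ to exist even with the arc lower bounds; the usual lower-bound elimination reduces this to a standard min-cost flow instance of polynomial size. I then invoke the integrality theorem for min-cost flow: the node--arc incidence matrix is totally unimodular and all capacities and bounds are integral, so the linear relaxation attains its optimum at an integral vertex. An integral optimal flow sends either $0$ or $1$ unit along each arc $u_e\to v_i$, which reads off directly as a binary assignment $x_{i,e}$ satisfying $S_1$, $B_1$, $B_2$, and $I_1$ simultaneously.

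The only delicate points, rather than a genuine obstacle, are the bookkeeping for the arc lower bounds $\Phi_l$ and the justification that rounding the balance bounds to integers is harmless in the unit-weight setting; both are standard. Solving the resulting min-cost flow with any polynomial algorithm (e.g., successive shortest paths or cost scaling) yields an optimal integral solution to $\theta_{B,I,O}$ in time polynomial in $|E|$ and $p$, proving membership in \Pclass.
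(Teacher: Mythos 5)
Your proof is correct, but it takes a genuinely different route from the paper's. The paper performs the same initial moves---rounding the balance bounds to $\lceil\Phi_l\rceil$ and $\lfloor\Phi_u\rfloor$ and relaxing $I_1$ to $0\le x\le 1$---but then establishes integrality by hand: it builds a bipartite \emph{rounding graph} on the fractional support (edge-vertices on one side, center-vertices on the other) and repeatedly applies a $\pm\varepsilon$ adjustment along alternating cycles, or along paths between two center-vertices of degree one, arguing that optimality forces the net cost change to be zero and that choosing $\varepsilon$ maximal eliminates at least one fractional variable per step, terminating after at most $|C|\cdot|E|$ iterations. This is, in effect, an ad hoc re-derivation of the integrality of the transportation polytope; the integrality of $\Phi_l,\Phi_u$ enters precisely in their path case, which your total-unimodularity argument subsumes. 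What you do differently is recognize the problem outright as a min-cost flow instance and invoke the standard integrality theorem plus lower-bound elimination, which buys a shorter argument and an explicit polynomial algorithm; the paper's version buys self-containedness at the cost of case analysis. One small inaccuracy in your write-up: after rounding, the feasibility condition is $p\lceil\Phi_l\rceil\le|E|\le p\lfloor\Phi_u\rfloor$, which does \emph{not} follow from $\Phi_l\le\overline{b}\le\Phi_u$ (e.g., $|E|=5$, $p=2$, $\Phi_l=\Phi_u=2.5$ admits no integral assignment). This does not threaten membership in \Pclass: since the rounding preserves the integral feasible set exactly, an infeasibility report from the flow algorithm is the correct answer for $\theta_{B,I,O}$; you should simply drop the claim that a feasible flow always exists under the standing assumption.
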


Proof: As edges are allocated fully to centers, we can round $\Phi_l$ and $\Phi_u$ to their nearest lower and higher integer values, respectively, via the functions 
$\left\lceil \Phi_{l} \right\rceil$ and $\left\lfloor \Phi_{u} \right\rfloor$. Next, relax the integrity constraint (I) by allowing the allocation variables to take fractional values, that is, set the domain of variable $x$ as 
\[
0 \leq x_{i,(j,k)} \leq 1 \quad \forall i \leq p,\;  (j,k) \in E,
\]
in place of the original binary restriction \(x_{i,(j,k)} \in \{0,1\}\). 

To obtain a binary solution, we iteratively round the fractional LP solution without increasing the objective value. At each iteration, we eliminate at least one fractional variable \(x_{i,(j,k)} \in (0,1)\) by constructing a bipartite \emph{rounding graph} $G= (U \cup W, F)$ where:
    \begin{itemize}
        \item \(U = E\) is the set of \emph{edge-vertices}, each representing an edge \((j,k) \in E\) (resp. nodes for node-based districting),
        \item \(W = C\) is the set of \emph{center-vertices}, each representing a center \(c_i \in C\),
        \item There exists an edge \(((j,k), i) \in F\) if and only if \( x_{i,(j,k)} \in (0,1) \), i.e., the allocation  of edge \((j,k)\) to center \(i\) is currently fractional.
    \item Each edge in \(F\) is weighted with the corresponding value \(x_{i,(j,k)}\). 
\end{itemize}

We now distinguish three structural cases for the rounding graph \(G\). In each case, apply a local weight adjustment to the values of the allocation variables, \(x_{i,(j,k)}\), to reduce the number of fractional variables while maintaining feasibility and the same objective function value. We showcase this procedure using the example rounding graph in Figure \ref{fig:enter-label}.

\begin{figure}[ht]
    \centering
    
    \begin{subfigure}{0.3\textwidth}
        \centering
        \begin{tikzpicture}[scale=1, transform shape]
            \tikzstyle{every node}=[draw, circle, minimum size=8mm]
            
            \foreach \x in {1,2,3} 
                \node (u\x) at (0, -\x) {$e_\x$};
            \foreach \y in {1,2,3} 
                \node (v\y) at (3, -\y) {$c_\y$};
            
            \draw (u1) -- (v1);
            \draw (u1) -- (v2);
            \draw (u2) -- (v1);
            \draw (u2) -- (v2);
            \draw (u3) -- (v3);
            \draw (u3) -- (v2);
        \end{tikzpicture}
        \caption{Example Binary Graph.}
        \label{fig:enter-label}
    \end{subfigure}
    \hfill
     \begin{subfigure}{0.3\textwidth}
        \centering
        \begin{tikzpicture}[scale=1, transform shape]
            \tikzstyle{every node}=[draw, circle, minimum size=8mm]
            
            \foreach \x in {1,2,3} 
                \node (u\x) at (0, -\x) {$e_\x$};
            \foreach \y in {1,2,3} 
                \node (v\y) at (3, -\y) {$c_\y$};
            
            \draw[red, dashed] (u1) -- (v1);
            \draw[blue, dash dot] (u1) -- (v2);
            \draw[blue, dash dot] (u2) -- (v1);
            \draw[red, dashed] (u2) -- (v2);
            \draw (u3) -- (v3);
            \draw (u3) -- (v2);
        \end{tikzpicture}
        \caption{Red-Blue Cycle.}
        \label{fig:2}
    \end{subfigure}
\hfill
    \begin{subfigure}{0.3\textwidth}
        \centering
        \begin{tikzpicture}[scale=1, transform shape]
            \tikzstyle{every node}=[draw, circle, minimum size=8mm]
            
            \foreach \x in {1,2,3} 
                \node (u\x) at (0, -\x) {$e_\x$};
            \foreach \y in {1,2,3} 
                \node (v\y) at (3, -\y) {$c_\y$};
            
            \draw[red, dashed] (u1) -- (v1);
            \draw[blue, dash dot] (u1) -- (v2);
            \draw[blue, dash dot] (u3) -- (v3);
            \draw[red, dashed] (u3) -- (v2);
        \end{tikzpicture}
        \caption{Red-Blue path.}
        \label{fig:3}
    \end{subfigure}
    
   
    \caption{Example of a rounding graph and identified substructures for eliminating fractional variables}
    \label{fig:all}
\end{figure}
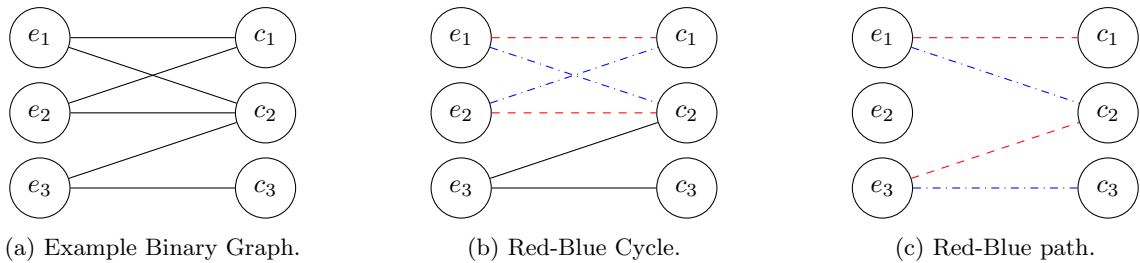

\textit{Case 1: The rounding graph contains a cycle.}

Since the graph is bipartite, all cycles are of even length. Without loss of generality, select one of the cycles and color its edges alternating between red and blue (see  Figure~\ref{fig:2}, for a small example). Next, increase all red edge weights by \(\varepsilon\), and decrease all blue ones by \(\varepsilon\), such that the value of each remains in $[0,1]$. Importantly, since any node is connected to the same number of blue and red edges, this adjustment neither changes $\sum_{i=1}^px_{i,(j,k)}$ (\ref{S1}) nor $\sum_{v\in V}w_{v,i}$ (\ref{S2}), thereby guaranteeing that the new values of $x$ constitute a valid solution. The objective changes by:

\[
\varepsilon \left( \sum_{((j,k),i) \in F_{\text{red}}} d_{c_i,(j,k)} - \sum_{((j,k),i) \in F_{\text{blue}}} d_{c_i,(j,k)} \right).
\]

If this term is less than zero, this indicates that the solution can be improved, contradicting optimality. If it is bigger than zero, the solution can also be improved by  interchanging the colors blue and red. Therefore, it must be exactly zero. Note that choosing the largest possible \(\varepsilon\) forces at least one variable to 0 or 1, reducing the number of fractional variables, as planned.

\textit{Case 2: The graph is a forest with at least one edge.}

By property \ref{S1}, no edge-vertex can have degree 1. By the definition of a forest, there exists a path between two center-vertices of degree 1. In similar fashion as Case 1, select one and color its edges,  alternating between red and blue (see Figure~\ref{fig:3}), and applying the same \(\pm \varepsilon\) adjustment.

Unlike Case 1, here the total assignment for the two endpoint centers changes slightly, but only by adjusting a single variable. Since the balance bounds \(\Phi_l, \Phi_u\) are integral and the only a single  fractional-valued edge of a center-vertex is adjusted, the bounds cannot be violated. This again reduces the number of binary variables.

\textit{Case 3: No edge remains.}

In this situation, all \( x_{c,e} \in \{0,1\} \), and the solution is integral and feasible.

Let \(\mathcal{F} \subseteq C \times E\) denote the initial set of allocation variables with fractional values. Since there are at most \(|C| \cdot |E|\) such variables, the number of iterations required to fully round the solution is at most \(|\mathcal{F}| \leq |C| \cdot |E|\), which is polynomial in the input size. Hence, the rounding process terminates in polynomial time, and  \(\theta_{B,I,O}\) is in \Pclass.
\qed

To extend the result to node-based districting, linearize $x^\textbf{E}$ instead and follow an analogous line of reasoning as the preceding proof. The resulting linear program can be solved in polynomial time.

\begin{theo}
    The variant $\theta_{C,I,N,W}$ is in \Pclass. \label{lem:WNAC}
\end{theo}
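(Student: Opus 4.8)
The plan is to exploit that $\theta_{C,I,N,W}$ imposes neither the balance constraints ($B$) nor the compactness objective ($O$), so under the dummy objective $\min 0$ it is a pure \emph{feasibility} question: does there exist an assignment satisfying \eqref{S1} and \eqref{A2} that splits $E$ into $p$ districts, each inducing a connected subgraph via \eqref{c1}, together with a center per district via \eqref{S2}? Because no surviving constraint or objective references the edge weights $b_{(j,k)}$, the weighted case is identical to the unweighted one here; and because centers are chosen freely ($N$) and may be reused, selecting centers will never be the bottleneck.

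First I would reduce feasibility to the connected-component structure of $G$. Writing $\kappa$ for the number of connected components of $G$ that contain at least one edge, I claim a feasible solution exists exactly when $\kappa \le p$. Necessity is the conceptual heart: since \eqref{c1} forces the edge set of each district to form a single connected subgraph, no district can contain edges from two distinct components of $G$; consequently each edge-bearing component occupies a disjoint, nonempty set of districts, so at least $\kappa$ of the $p$ districts are consumed, giving $\kappa \le p$.

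For sufficiency I would build an explicit certificate. List the edge-bearing components $H_1,\dots,H_\kappa$, place all of $H_t$ in district $t$, leave districts $\kappa+1,\dots,p$ empty, and assign $c_t$ to an arbitrary vertex of $H_t$ (and an arbitrary vertex of $V$ to each empty district, exploiting that centers may coincide). Then \eqref{S1} and \eqref{A2} hold because the components partition $E$, \eqref{S2} holds by construction, and \eqref{c1} holds district by district because each nonempty district is precisely one connected component, while an empty district contains no isolated subset $D$ to separate from its center.

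Finally I would bound the running time and note the node-based transfer. The components and $\kappa$ follow from one $O(|V|+|E|)$ graph traversal, and the test $\kappa\le p$ together with the construction is linear, so $\theta_{C,I,N,W}$ lies in \Pclass. I expect the only delicate point to be the bookkeeping around empty districts and shared centers---confirming that \eqref{c1} and \eqref{S2} remain genuinely satisfiable there---rather than any deep combinatorial obstacle, since the real content is simply that contiguity forbids a district from spanning two components. For the node-based analog governed by \eqref{con:node-cut}, the same component count drives feasibility, but I would handle isolated vertices separately: such a vertex carries no incident edge and can satisfy \eqref{con:node-cut} only by being the center of its own district, which I would account for when comparing the number of districts required against $p$.
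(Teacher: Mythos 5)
Your proposal is correct, and its core is the same as the paper's: without balance ($B$) and compactness ($O$) the problem is a trivially satisfiable feasibility question, settled by an explicit assignment. The paper's own proof is a one-liner---allocate \emph{every} edge to district $1$, leave the other districts empty, and pick centers arbitrarily---which implicitly relies on $G$ being connected (the paper's standing setting, cf.\ the connectivity assumption made explicit in Lemma~\ref{lem:Bonus}); on a disconnected graph that construction would violate \eqref{c1}, since a component not containing $c_1$ yields a subset $D$ with no crossing edge, all edges in district $1$, and no center in $V_D$. Your version is strictly more careful on exactly these points: you characterize feasibility on general inputs via the count $\kappa$ of edge-bearing components ($\kappa \le p$, with necessity from the fact that \eqref{c1} confines each district to one component, and sufficiency from the one-component-per-district certificate), you pin each center to a vertex of its district rather than leaving it ``arbitrary'' (note that \eqref{c1} with $D$ equal to a district's full edge set forces $c_i \in V_D$, so even in the connected case the center must be incident to an edge---harmless there, since every vertex of a connected graph with $|V|\ge 2$ is, but your explicit placement removes the gap), and you verify the empty-district bookkeeping under the implicit convention $D \neq \emptyset$. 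What the paper's route buys is brevity under its modeling assumptions; what yours buys is an exact feasibility criterion, robustness to disconnected instances, an explicit $O(|V|+|E|)$ bound, and a sharper observation for the node-based analog (an isolated vertex can satisfy \eqref{con:node-cut} with $S=\{v\}$ only as its own district's center), where the paper's parenthetical ``(resp.\ node)'' glosses over the same issue.
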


Proof: Compactness (i.e., the objective function) and balance are not enforced in this problem. Hence, it is possible to allocate every edge (resp. node) to a single district, say district 1, by setting \[\quad  x_{1,(j,k)}=1 \quad\ \forall (j,k)\in E\; \text{, and }\] \[x_{i,(j,k)}=0  \quad \forall (j,k)\in E\quad \forall i>1.\] The center nodes can be chosen arbitrarily, and the resulting solution is valid. \qed

In Lemma~\ref{lem:Bonus}, we extend this variant by incorporating balance constraints and investigate how tightly these constraints can be imposed while still guaranteeing the existence of a valid solution.

\subsection{Subproblems with {\NP } Complexity}\label{sec:np}

\begin{theo}
    The variants $\theta_{N,O}$, $\theta_{I,N,O}$, and $\theta_{C,I,N,O}$ are \NP.\label{lem:ANO}
\end{theo}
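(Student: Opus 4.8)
The plan is to reduce from \textsc{Vertex Cover}, which is NP-complete, and to design a single construction that simultaneously settles all three variants. Given a \textsc{Vertex Cover} instance $(G=(V,E),k)$ with (without loss of generality) $k\le|E|$, I would build an EBD instance on the same graph $G$, set the number of districts to $p=k$, take all edge weights equal to one, and use the edge--vertex distance $d_{v,(j,k)}=\min(d_{v,j},d_{v,k})$ (i.e.\ $\alpha=0$, the common choice). The key observation is that an edge $(j,k)$ contributes cost $0$ to the objective exactly when one of its endpoints is chosen as a center, while an edge with no chosen endpoint contributes at least $1$, since distinct nodes are at distance at least one. Hence forcing the optimal objective to equal $0$ is precisely the task of covering every edge with at most $p=k$ selected vertices.

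I would first establish the equivalence for $\theta_{N,O}$. For a fixed center set, the objective $\sum_v\sum_i\sum_{(j,k)}w_{iv}\,d_{v,(j,k)}\,x_{i,(j,k)}$ reduces, per edge, to $\sum_i d_{c_i,(j,k)}\,x_{i,(j,k)}$ with $\sum_i x_{i,(j,k)}=1$; being a convex combination of the per-center distances, it is minimized---even under fractional $x$---by routing each edge wholly to its nearest center, giving value $\sum_{(j,k)\in E}\min_i d_{c_i,(j,k)}$. If $G$ has a vertex cover $S$ with $|S|\le k$, I place the centers at $S$ (padding with repeated or arbitrary nodes to reach exactly $p=k$, which the Node-Center-Selection rules permit), so every edge has a distance-$0$ center and the objective is $0$. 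Conversely, if the optimum is $0$, every edge is incident to some center, so the distinct centers form a vertex cover of size at most $p=k$. Because any no-instance forces objective at least $1$, deciding whether the optimum equals $0$ solves \textsc{Vertex Cover}, proving $\theta_{N,O}$ is \NP.

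Next I would argue that the same construction transfers verbatim to $\theta_{I,N,O}$ and $\theta_{C,I,N,O}$. The zero-cost solution obtained from a vertex cover is already integral (each edge is assigned entirely to one endpoint-center), so it is feasible for Integrity, and the forward direction only strengthens since an integral optimum can be no smaller than the fractional one. For contiguity, I would observe that the district of a center $c$ consists solely of edges incident to $c$, which form a connected star containing $c$ and therefore satisfy constraint \eqref{c1}; empty districts arising from padded or repeated centers trivially satisfy it as well. The reverse direction is unaffected, as any feasible solution of objective $0$---contiguous or not---still forces the centers to cover every edge. Thus the optimum equals $0$ iff $G$ admits a vertex cover of size $\le k$ in all three variants.

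The part requiring the most care, and the main obstacle, is verifying that relaxing Integrity cannot let a fractional assignment undercut the integral optimum, so that the threshold-$0$ gap argument remains valid for $\theta_{N,O}$; I would also need to confirm that the repeated-center and empty-district conventions are admissible under the structural and Node-Center-Selection rules. I would close by remarking that distance-$0$ coverage corresponds to \textsc{Vertex Cover} specifically because the clients are edges; for the node-based analogue this scheme degenerates (a node is at distance $0$ only from itself), so there the result should instead be obtained by a reduction from \textsc{Dominating Set}, placing centers to bring every node within distance one.
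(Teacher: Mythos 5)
Your proposal is correct and follows essentially the same approach as the paper's proof: a reduction from \textsc{Vertex Cover} in which centers incident to an edge yield zero (minimal) assignment cost, the star structure of each district gives contiguity for free, the fractional-relaxation argument shows that dropping Integrity cannot change the optimum, and the node-based analogue is handled via \textsc{Dominating Set}. The only differences are cosmetic: you fix $\alpha=0$ and phrase it as a decision-version gap argument at objective value $0$ with $p=k$, whereas the paper keeps general $\alpha$ (target value $\sum_{(j,k)\in E}d_{j,(j,k)}$) and searches over $p=1,\dots,|V|$ as a Turing reduction.
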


Proof: We show that the classical \textsc{vertex cover} problem —one of Karp’s well-known 21 $\mathsf{NP}$-complete problems \cite{karp2009reducibility}— can be reduced from an instance of the edge-based districting problem $\theta_{I,N,O}$ or $\theta_{C,I,N,O}$. In the vertex cover problem, the goal is to select a subset of vertices $U \subseteq V$ of minimum cardinality such that every edge $e\in E$ has at least one endpoint in $U$.

Suppose we have an algorithm that can solve $\theta_{I,N,O}$ optimally in polynomial time. Then, we can use it to solve vertex cover by evaluating $\theta_{I,N,O}$, for $p = 1, 2, \ldots, |V|$. The smallest value of $p$ for which the objective value is $\sum_{(j,k)\in E}d_{j,(j,k)}$ corresponds to an optimal vertex cover of size $p$, as this is a solution where any edge is connected to an adjacent center node. This argument also holds for $\theta_{C,I,N,O}$ as a vertex cover solution is also connected.   

Hence, solving $\theta_{I,N,O}$ or $\theta_{C,I,N,O}$ efficiently would yield a polynomial-time algorithm for vertex cover, implying that $\theta_{I,N,O}$ is \NP.

We now argue that dropping Integrity ($I$) does not change the objective value and therefore, does not make the problem easier to solve. To this end, we allow each edge to be fractionally assigned to multiple centers. This change cannot yield an invalid solution since neither balance nor contiguity are being enforced. 

For any optimal solution where an edge $(j,k)$ is assigned fractionally to two districts $i_1$ and $i_2$, it must hold that $d_{i_1,(j,k)}=d_{i_2,(j,k)}$ or, otherwise, it would be suboptimal to assign part of the edge to the more distant of the two. Nonetheless, this also means that the edge could instead be allocated in full to one of these centers, which enables a fractional value to be converted into an integral one without changing the objective value. Therefore, the relaxed version is equivalent in complexity to the original, and it follows that  $\theta_{N,O}$ remains \NP.\qed

To extend the result to node-based districting, perform an analogous reduction using the \NP \textsc{ dominating set} problem instead of vertex cover. In dominating set, a graph $G=(V,E)$ is given and the problem is to give a minimal subset $D\subseteq V$ such that for every vertex $u\in V \backslash D$ there is a vertex $v\in D$ such that $(u,v)\in E$ (see \cite{garey1979computers}). A dominating set problem solution of size $p$ directly refers to a $\theta_{I,N,O}$ solution with the same $p$ and objective value of $|V|-p$.

\begin{theo}
   The variants $\theta_{B,C,I}$, $\theta_{B,C,I,N}$ and $\theta_{B,I,O,W}$ are \NP . \label{lem:WABC}
\end{theo}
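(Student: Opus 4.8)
The plan is to establish each of the three variants by a separate polynomial reduction from a standard NP-complete number problem, exploiting two features of the subproblems: when compactness is absent the districting decision collapses to a pure feasibility question, and balance can be tightened (by setting $\Phi_\ell=\Phi_u=\overline{b}$, a setting supported by both balance variants as noted earlier) to force every district to carry exactly $\overline{b}$ units.

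First, for the two unweighted contiguous variants $\theta_{B,C,I}$ and $\theta_{B,C,I,N}$ I would reduce from \textsc{3-partition}. Given $3m$ integers $a_1,\dots,a_{3m}$ with $B/4<a_i<B/2$ and $\sum_i a_i=mB$, construct a spider (a tree) with a single root $r$ and $3m$ pendant paths, the $i$-th ``leg'' consisting of $a_i$ unit-weight edges emanating from $r$. Set $p=m$, so balance forces each district to contain exactly $\overline{b}=B$ edges. For $\theta_{B,C,I}$ the centers are fixed, so I place all $m$ of them at $r$, which the model permits because it allows several districts to share a center; for $\theta_{B,C,I,N}$ the centers are chosen freely. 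I claim the \textsc{3-partition} instance is a YES-instance iff the constructed districting instance is feasible. The forward direction is immediate: assign the three legs of each triple to one district, which is connected through $r$ and has exactly $B$ edges.

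The crux, and the step I expect to be the main obstacle, is the reverse direction: showing that \emph{every} feasible districting must keep each leg intact within a single district, \emph{regardless of where centers lie}. I would argue this structurally via the degree-one far endpoint of each leg. The district containing a leg's outermost edge can reach the rest of the graph (and hence attain the required size $B>a_i$) only by extending monotonically back toward $r$ along that leg; if it stops short of $r$ it is confined to a strict sub-path of a single leg, of length at most $a_i<B$, a contradiction. Hence it must contain the whole leg. This argument uses only connectivity and the bound $a_i<B$, so it is insensitive to center placement and covers the fixed-center and free-center variants at once. Once each leg is shown to lie in one district, the districts induce a partition of the $a_i$ into $m$ groups each summing to $B$, i.e.\ a valid $3$-partition. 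Since \textsc{3-partition} is strongly NP-complete, the spider has only $mB$ edges, polynomial in the input, so the reduction is polynomial.

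Finally, for the weighted variant $\theta_{B,I,O,W}$ (no contiguity) I would reduce from \textsc{partition}. Given $a_1,\dots,a_n$ with $\sum_i a_i=2S$, build a star with hub $h$ and $n$ pendant edges of weights $a_1,\dots,a_n$, take $p=2$ centers both located at $h$, and let balance force each district to carry total weight exactly $S$. Because every edge is incident to $h$, each coefficient $d_{h,(h,v_i)}$ depends only on the edge, so the objective equals the constant $\alpha\sum_i a_i$ for every feasible assignment; the problem therefore reduces to deciding whether the edges split into two equal-weight groups, which is exactly \textsc{partition}. (Taking $p=m$ centers at $h$ with the $3$-partition weights gives the analogous reduction if strong NP-hardness is preferred.) Here the only delicate point is confirming that the constructed weights and the shortest-path distances form a legitimate metric instance, which follows directly from the star's shortest-path structure; as in the contiguous case, the absence of compactness means feasibility alone is hard, so the optimization problem is NP-hard.
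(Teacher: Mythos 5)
Your proposal is correct and takes essentially the same route as the paper: the identical spider construction from \textsc{3-Partition} with $\Phi_\ell=\Phi_u=\overline{b}$ and all centers placed at the root (legitimate, as the model allows shared centers), the same contiguity-plus-balance counting argument (a detached piece of a leg has fewer than $B$ edges, so legs stay intact regardless of center placement), and the same collapse of legs into a weighted star for the variant without contiguity. The only cosmetic deviations are that you reduce the weighted case from \textsc{Partition} (yielding weak NP-hardness, though you note the \textsc{3-Partition} alternative that the paper uses throughout) and you neutralize the compactness objective directly by observing it is constant on the star, where the paper instead invokes its general reduction (Theorem~\ref{lem:O}) that adding the objective cannot make a problem easier.
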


We give a polynomial-time reduction to the \textsc{3-Partition} problem, which is known to be strong-$\mathsf{NP}$-complete (see \cite{garey1979computers}). Given a set of positive integers $\mathcal{S} = \{s_1, s_2, \dots, s_n\}$, where $n=3m$, the \textsc{3-Partition} problem seeks to partition $\mathcal{S}$ into $m$ disjoint subsets $S_1, \dots, S_m$ such that the sum of the elements in each subset is equal. Therefore, we must have that
\[
\sum_{s \in S_1} s = \sum_{s \in S_2} s = \dots = \sum_{s \in S_m} s =\sum_{s \in \mathcal{S}} s\cdot \frac{1}{m}=T.
\]

For any integer $s$, the relationship $\frac{T}{4}<s<\frac{T}{2}$ holds and, thus, it directly follows that each set in the disjunction contains exactly $3$ integers.

We now construct an instance of $\theta_{B,C,I}$ from a $3$-Partitioning problem using Algorithm \ref{alg:1}. 
\begin{algorithm}[H]
\caption{Graph Construction to reduce from $\theta_{B,C,I}$ to 3-Partitioning.} \label{alg:1}
\begin{algorithmic}[1]
\Require Set of positive integers $\mathcal{S} = \{s_1, s_2, \dots, s_n\}$
\State Initialize vertex set $V = \{v_0\}$
\State Initialize edge set $E = \emptyset$
\State $p \gets m$
\State $\Phi_u \gets \Phi_l \gets T$
\For{$i = 1$ to $n$}
    \State $V \gets V \cup v_{i,1}$ 
    \State $E \gets E \cup (v_{i,1}, v_0)$ 
    \For{$j = 2$ to $s_i$}
        \State $V \gets V \cup v_{i,j}$ 
        \State $E \gets E \cup (v_{i,j}, v_{i,j-1})$  
    \EndFor
\EndFor
\State \Return Graph $G = (V, E)$ with weights and center assignments

\end{algorithmic}
\end{algorithm}

Any solution to the constructed instance of $\theta_{B,C,I}$ must assign $T$ edges to every district. Notice that if $(v_{i,1},v_0)$ is assigned to district $j$ ($x_{j,(v_{i,1},v_0)}=1$), it also holds that \[x_{j,(v_{i,k},v_{i,k-1})}=1 \quad \quad \forall\quad  1<k\leq S_i, \]

meaning that any edge along one “arm" is assigned to the same district. This is due to the connectivity requirement. Otherwise, a subset of these edges need to form an independent district, which is not possible as there are less than $\frac{T}{2}$ edges, per the above note. Thus, any feasible solution to this instance corresponds directly to a solution to the 3-Partition problem and vice versa. Figure \ref{figure25} visualizes the transformation for a 3-Partition instance.

\begin{figure}[H]
    \centering
    \begin{tikzpicture}
        \node[shape=circle,draw=black] (112) at (0,5) {$v_0$};
    
        \node[shape=circle,draw=black] (1) at (-5,3) {};
        \node[shape=circle,draw=black] (11) at (-5,2) {};
        \node[shape=circle,draw=black] (12) at (-5,1) {};
        \node[shape=circle,draw=black] (13) at (-5,0) {};
        \node[shape=circle,draw=none] (16) at (-5,-2.5) {$4$};

        \node[shape=circle,draw=black] (2) at (-3,3) {};
        \node[shape=circle,draw=black] (21) at (-3,2) {};
        \node[shape=circle,draw=black] (22) at (-3,1) {};
        \node[shape=circle,draw=black] (23) at (-3,0) {};
        \node[shape=circle,draw=black] (24) at (-3,-1) {};
        \node[shape=circle,draw=none] (26) at (-3,-2.5) {$5$};

        \node[shape=circle,draw=black] (3) at (-1,3) {};
        \node[shape=circle,draw=black] (31) at (-1,2) {};
        \node[shape=circle,draw=black] (32) at (-1,1) {};
        \node[shape=circle,draw=black] (33) at (-1,0) {};
        \node[shape=circle,draw=black] (34) at (-1,-1) {};
        \node[shape=circle,draw=none] (36) at (-1,-2.5) {$5$};
        
        \node[shape=circle,draw=black] (4) at (1,3) {};
        \node[shape=circle,draw=black] (41) at (1,2) {};
        \node[shape=circle,draw=black] (42) at (1,1) {};
        \node[shape=circle,draw=black] (43) at (1,0) {};
        \node[shape=circle,draw=black] (44) at (1,-1) {};
        \node[shape=circle,draw=none] (46) at (1,-2.5) {$5$};
        
        \node[shape=circle,draw=black] (5) at (3,3) {};
        \node[shape=circle,draw=black] (51) at (3,2) {};
        \node[shape=circle,draw=black] (52) at (3,1) {};
        \node[shape=circle,draw=black] (53) at (3,0) {};
        \node[shape=circle,draw=black] (54) at (3,-1) {};
        \node[shape=circle,draw=none] (56) at (3,-2.5) {$5$};

        \node[shape=circle,draw=black] (6) at (5,3) {};
        \node[shape=circle,draw=black] (61) at (5,2) {};
        \node[shape=circle,draw=black] (62) at (5,1) {};
        \node[shape=circle,draw=black] (63) at (5,0) {};
        \node[shape=circle,draw=black] (64) at (5,-1) {};
        \node[shape=circle,draw=black] (65) at (5,-2) {};
        \node[shape=circle,draw=none] (66) at (5,-2.5) {$6$};

        \draw[dashed, purple] (112) -- (1);
        \draw[-, green!50!black] (112) -- (2);
        \draw[-, green!50!black] (112) -- (3);
        \draw[dashed, purple] (112) -- (6);
        \draw[dashed, purple] (112) -- (5);
        \draw[-, green!50!black] (112) -- (4);

        \draw[-, green!50!black] (2) -- (21);
        \draw[-, green!50!black] (3) -- (31);
        \draw[-, green!50!black] (4) -- (41);
        \draw[dashed, purple] (1) -- (11);
        \draw[dashed, purple] (6) -- (61);
        \draw[dashed, purple] (5) -- (51);
        
        \draw[dashed, purple] (11) -- (12);            
        \draw[dashed, purple] (12) -- (13);

        \draw[dashed, purple] (61) -- (62);            
        \draw[dashed, purple] (62) -- (63);
        \draw[dashed, purple] (63) -- (64);            
        \draw[dashed, purple] (52) -- (53);
        \draw[dashed, purple] (51) -- (52);            
        \draw[dashed, purple] (54) -- (53);
     
        \draw[-, green!50!black] (42) -- (43);
        \draw[-, green!50!black] (42) -- (41);
        \draw[-, green!50!black] (44) -- (43);
        \draw[dashed, purple] (65) -- (64);

        \draw[-, green!50!black] (22) -- (23);
        \draw[-, green!50!black] (22) -- (21);
        \draw[-, green!50!black] (24) -- (23);

        \draw[-, green!50!black] (32) -- (33);
        \draw[-, green!50!black] (32) -- (31);
        \draw[-, green!50!black] (34) -- (33);
    \end{tikzpicture}
    \caption{A transformation from the 3-Partition problem of $S=\{4,5,5,5,5,6\}$ to $\theta_{B,C,I}$; the colors show a solution which directly correspond to the disjunction $S_1=\{4,5,6\}$ and $S_2=\{5,5,5\}$.}  
    \label{figure25}
\end{figure}
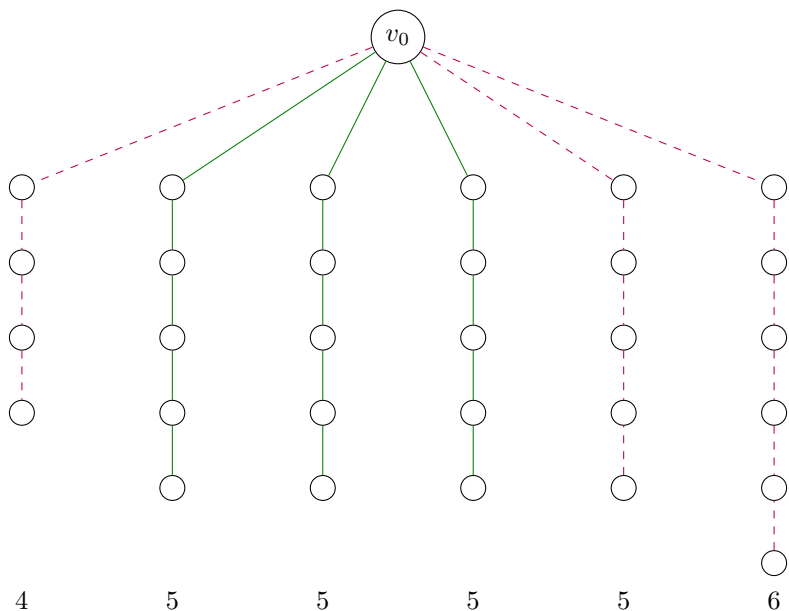

Since 3-Partitioning is strongly-\NP, the problem remains \NP\ if any $s_i$ is bounded polynomially in $n$; additionally, \NP ness is preserved through the reduction by using $T\cdot M$ edges (see \cite{andreev2004balanced}). Therefore, solving $\theta_{B,C,I}$ is at least as hard as solving 3-Partition. 

Determining the center node locations is easy in this setting, as they can all be placed at $v_0$ (recall that our model assumes that multiple districts can shared the same node as their center). Hence this discussion also holds for $\theta_{B,C,N,I}$.

This proof can be further adapted to $\theta_{B,I,W}$, by setting $b_{(v_0,v_{i,1})}=s_i$ and including only those vertices that are directly incident to $v_0$. In other words, vertices $v_{i,j}$ with $j>1$ are excluded form this construction (see Figure \ref{figure252}, for a small example). The previous reasoning holds and, thus $\theta_{B,I,W}$ is also \NP. Finally, because the introduction of the objective function cannot make the problem easier, the problem $\theta_{B,I,O,W}$ is also \NP. \qed

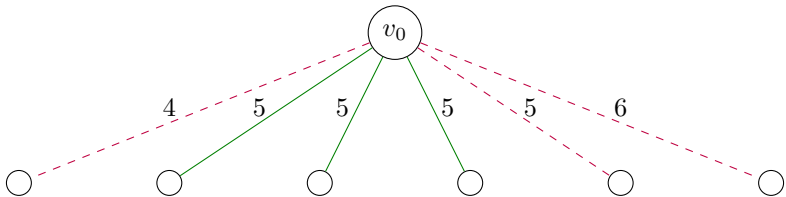
\begin{figure}[H]
    \centering
    \begin{tikzpicture}
        \node[shape=circle,draw=black] (112) at (0,5) {$v_0$};
    
        \node[shape=circle,draw=black] (1) at (-5,3) {};
        \node[shape=circle,draw=none] (16) at (-3,4) {$4$};

        \node[shape=circle,draw=black] (2) at (-3,3) {};
        \node[shape=circle,draw=none] (26) at (-1.8,4) {$5$};

        \node[shape=circle,draw=black] (3) at (-1,3) {};
        \node[shape=circle,draw=none] (36) at (-0.7,4) {$5$};
            
        \node[shape=circle,draw=black] (4) at (1,3) {};
        \node[shape=circle,draw=none] (46) at (0.7,4) {$5$};
            
        \node[shape=circle,draw=black] (5) at (3,3) {};
        \node[shape=circle,draw=none] (56) at (1.8,4) {$5$};

        \node[shape=circle,draw=black] (6) at (5,3) {};
        \node[shape=circle,draw=none] (66) at (3,4) {$6$};

        \draw[dashed, purple] (112) -- (1);
        \draw[-, green!50!black] (112) -- (2);
        \draw[-, green!50!black] (112) -- (3);
        \draw[dashed, purple] (112) -- (6);
        \draw[dashed, purple] (112) -- (5);
        \draw[-, green!50!black] (112) -- (4);
    \end{tikzpicture}
    \caption{A transformation from the 3-Partition problem of $S=\{4,5,5,5,5,6\}$ to $\theta_{B,I,W}$. 
    The colors indicate the partition: dashed purple edges correspond to one subset ($S_1=\{4,5,6\}$), solid green edges to the other subset ($S_2=\{5,5,5\}$).}  
    \label{figure252}
\end{figure}

\section{Reductions between districting problems}\label{sec:Reduc}

This section demonstrates that certain districting criteria can only increase the difficulty of the problem. Together with proofs of sections \ref{sec:P} and \ref{sec:np}, we can apply the findings  to show \NP ness or polynomial time solvability for a broader range of problems. Let $\mathcal{F}=\{B,C,I,N,O,W\}$ be the six districting criteria considered in this work, denoting its power set as $\mathcal{P}(\mathcal{F})$. As before, let $X\in \mathcal{P}(\mathcal{F})$ represent any subset on design criteria and $\theta_X$ be the associated problem.

\begin{theo}
    The complexity relationship $\theta_{X,N}\leq_P \theta_{X,B,N}$ holds.\label{lem:B}
\end{theo}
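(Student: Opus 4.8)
The plan is to give a direct, instance-preserving reduction that renders the balance criterion vacuous, so that $\theta_{X,B,N}$ becomes an exact restatement of $\theta_{X,N}$. Given an instance $\mathcal{I}$ of $\theta_{X,N}$ -- a graph $G=(V,E)$, the number of districts $p$, the distances $d$, and (if $W\in X$) the weights $b$ -- I would construct an instance $\mathcal{I}'$ of $\theta_{X,B,N}$ that is identical to $\mathcal{I}$ in every respect and, in addition, fixes the balance bounds to
\[
\Phi_{l} = 0 \qquad\text{and}\qquad \Phi_{u} = \sum_{(j,k)\in E} b_{(j,k)}.
\]
These values respect the standing assumption $\Phi_{l}\leq \overline{b}\leq \Phi_{u}$, since $0\leq \overline{b}$ and $\sum_{(j,k)\in E} b_{(j,k)} = p\,\overline{b}\geq \overline{b}$ for $p\geq 1$. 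The construction is clearly computable in time polynomial in the size of $\mathcal{I}$, and I would take the solution map to be the identity.

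Next I would verify that constraints \eqref{B1} and \eqref{B2} are trivially satisfied for every candidate solution of $\mathcal{I}'$: for any district $i$, the assigned weight $\sum_{(j,k)\in E} b_{(j,k)} x_{i,(j,k)}$ lies in $[\Phi_l,\Phi_u]$ because each $x_{i,(j,k)}\in[0,1]$ and all weights are non-negative. Hence appending the balance group removes no point from the feasible region, while leaving the objective \eqref{O1} (when $O\in X$) and every other constraint group in $X$ untouched, as none of them references $\Phi_{l}$ or $\Phi_{u}$. Consequently the feasible sets of $\mathcal{I}$ and $\mathcal{I}'$ coincide and, when an objective is present, agree pointwise; a feasible (resp. optimal) solution of $\mathcal{I}'$ is therefore a feasible (resp. optimal) solution of $\mathcal{I}$, and conversely. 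This establishes $\theta_{X,N}\leq_P\theta_{X,B,N}$.

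The step I expect to require the most care is arguing that this decoupling holds uniformly over \emph{every} $X\in\mathcal{P}(\mathcal{F}\setminus\{B\})$ rather than for a single fixed combination. The crucial observation is that a non-binding balance group interacts with no other criterion: whatever contiguity ($C$), integrity ($I$), node-selection ($N$), objective ($O$), or weight ($W$) conditions $X$ imposes, they are evaluated on exactly the same variables and data in both instances, so making balance non-binding can neither create nor destroy feasibility nor shift the optimum. A secondary point worth stating explicitly is that the general $(\Phi_{l},\Phi_{u})$ parameterization is essential: a multiplicative tolerance forces $\Phi_{l}>0$ and would forbid empty districts -- potentially making $\mathcal{I}'$ strictly more restrictive than $\mathcal{I}$ -- whereas the additive choice $\tau=p-1$ (equivalently, $\Phi_{l}=0$ and $\Phi_{u}=\sum_{(j,k)\in E} b_{(j,k)}$) makes both bounds non-binding, which is precisely what the reduction needs.
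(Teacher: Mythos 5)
Your reduction is correct for the general $(\Phi_l,\Phi_u)$ parameterization (and hence for additive tolerance), and that half coincides with the paper's first line: set $\Phi_l=0$, $\Phi_u=\sum_{(j,k)\in E} b_{(j,k)}$, so \eqref{B1} and \eqref{B2} are vacuous. But there is a genuine gap: the theorem is meant to cover both balance variants introduced in the preliminaries, including the \emph{multiplicative} tolerance, where $\tau\in(0,1]$ forces $\Phi_l=\tau\,\overline{b}>0$ and every district must be nonempty. Your final paragraph explicitly disclaims this case, asserting that non-binding bounds are ``precisely what the reduction needs'' --- which contradicts the paper's own proof, whose entire nontrivial content is exactly this case. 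The paper sets $\Phi_l$ to a small positive value (via a tiny $\tau$, so that one edge per district suffices) and then uses the criterion $N$ to show the positive lower bound is without loss of generality: from any optimal solution, iteratively pick a district with at least two edges, remove one edge chosen so that the remainder stays connected (needed when $C\in X$), hand it to an empty district, and declare one of its endpoints that district's center. Because the new center is adjacent to its single edge, the compactness objective does not increase when $O\in X$, so feasibility and optimality are preserved.

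The structural warning sign in your write-up is that your argument never uses $N$ at all, even though $N$ appears on both sides of the statement; under your reading you would equally have proved $\theta_{X}\leq_P \theta_{X,B}$, yet the paper explicitly flags that relationship as unclear for the multiplicative version (see the discussion around Figure~\ref{figure36}, where co-located centers make it optimal to leave a district empty). So the role of $N$ is not decorative: it is exactly what rescues the reduction when $\Phi_l>0$ is unavoidable. To repair your proof, keep your construction for the additive/general case and add the reallocation argument above for the multiplicative case.
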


Proof: To show this, we need to introduce Balance constraints, which does not make the problem easier. For the additive version of Balance (B), set $\Phi_l=0$ and $\Phi_u=\sum_{(j,k) \in E} b_{(j,k)}$ and we are done. For the multiplicative version, setting $\Phi_l=0$ is disallowed as setting $\tau=0$ gives a division by $0$. Instead, for this case, set $\Phi_l=1$ through $\tau=\frac{1}{|E|}$. We show that for Node Center Selection (N), there is an optimal solution containing one edge per district. Otherwise, we could iteratively choose one district with at least two edges, choose one edge such that the remaining part stays connected (if necessary), and then choose an adjacent node as center. 

In node-based districting, this is even more straightforward. In this setting, we can promote vertices to become centers, since  connecting a vertex to itself no added cost.\qed 

Interestingly, applying the  same reasoning for the multiplicative version of Balance is less clear for the complexity relationship; meaning $\theta_{X}\leq_{P}^? \theta_{X,B}$ is unclear. By construction, we have that $\overline{b} \cdot \tau>0$. However, we can think of cases where any optimal solution for $\theta_{B,I,O}$ does not assign an edge to each center. The problem evolves when multiple centers are at the same vertex as visualized in Figure \ref{figure36}. Here two centers are located in the right vertex, and it is optimal to assign the left two edges to the left center instead.

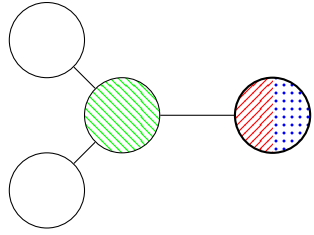
\begin{figure}[H]
    \centering
    \begin{tikzpicture}

    \def\radius{0.5cm}
    \node[shape=circle,fill=white] (2324) at (-0.32,0) {};
    \begin{scope}
        \clip (0,0) circle(\radius);
        \fill[red, pattern=north east lines, pattern color=red] (-\radius,\radius) rectangle (0, -\radius);
    \end{scope}

    \begin{scope}
        \clip (0,0) circle(\radius);
        \fill[blue, pattern=dots, pattern color=blue] (0,\radius) rectangle (\radius, -\radius);
    \end{scope}

    \draw[thick] (0,0) circle(\radius);

    \node[shape=circle,minimum size=1cm, fill=green!30, pattern=north west lines, pattern color=green, draw=black] (112) at (-2,0) {};

    \node[shape=circle,minimum size=1cm,draw=black, fill=white] (2) at (-3,-1) {};
    \node[shape=circle,minimum size=1cm,draw=black, fill=white] (1) at (-3,1) {};

    \draw[-] (112) -- (1);
    \draw[-] (112) -- (2);
    \draw[-] (2324) -- (112);

    \end{tikzpicture}
    \caption{Any optimal solution would not assign an edge to either the red or the blue center.}  
    \label{figure36}
\end{figure}

\begin{theo}
    The complexity relationship $\theta_X\leq_P \theta_{X,O}$ holds.\label{lem:O}
\end{theo}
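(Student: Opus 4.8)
The plan is to give a polynomial-time reduction that uses any algorithm for $\theta_{X,O}$ as a black-box oracle to solve $\theta_X$. The guiding observation is that $\theta_X$ and $\theta_{X,O}$ are built from an identical system of constraints; they differ only in their objective, with $\theta_X$ carrying the dummy objective $O_1'$ ($\min 0$) and $\theta_{X,O}$ carrying the compactness objective \eqref{O1}. Because an objective function restricts \emph{which} feasible point is selected but never \emph{whether} one exists, the feasible regions of the two problems coincide on any shared instance.

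Concretely, given an instance of $\theta_X$ on a graph $G = (V,E)$ together with its parameters, I would form the corresponding instance of $\theta_{X,O}$ on the same graph by simply attaching the objective \eqref{O1}. The distance coefficients $d_{v,(j,k)}$ needed by that objective are always available, since they are induced by the shortest-path metric of $G$ via \eqref{firsteq}; no new data must be fabricated and the transformation is clearly polynomial in the input size. I would then make a single call to the $\theta_{X,O}$ oracle on this instance. If it returns an optimal solution, that solution is in particular feasible and hence certifies a feasible solution for $\theta_X$; if it instead reports infeasibility, then $\theta_X$ is infeasible as well, by the coincidence of feasible regions. In either case a single oracle call decides $\theta_X$, establishing $\theta_X \leq_P \theta_{X,O}$.

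The only delicate point — and the closest thing to an obstacle in an otherwise immediate argument — is to confirm that no constraint group in $X$ tacitly relies on the objective, so that appending \eqref{O1} genuinely leaves the feasible set untouched. A quick inspection of the groups (B), (C), (I), (N), and (W) shows that each is expressed solely through the allocation variables $x$, the location variables $w$, and the balance/distance parameters, with no reference to the objective. Hence adding the compactness objective can only change the reported optimum and never the existence of a feasible solution, so the reduction is valid and the claimed complexity relationship holds.
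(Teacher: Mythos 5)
Your proposal is correct and follows essentially the same route as the paper, whose one-line proof simply observes that a distance-optimal (compact) solution certifies the existence of a feasible one; your write-up is a more detailed elaboration of that same idea, adding the explicit oracle-call framing, the polynomial computability of the distance coefficients, and the check that no constraint group in $X$ depends on the objective. Nothing in your argument diverges in substance from the paper's reasoning.
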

Proof: A compact (i.e., distance-optimal) solution contains the information that a feasible solution exists and is, therefore, the former is least as hard to find. \qed
\begin{theo}
    The complexity relationship $\theta_X\leq_P \theta_{X,W}$ holds.\label{lem:W}
\end{theo}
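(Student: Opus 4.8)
The plan is to observe that the unweighted problem $\theta_X$ is precisely the restriction of the weighted problem $\theta_{X,W}$ to instances in which every edge carries unit weight. Accordingly, the reduction I would construct is essentially the identity map on instances. Given an arbitrary instance of $\theta_X$ on a graph $G=(V,E)$, I would output the same graph $G$ together with the weight assignment $b_{(j,k)} = 1$ for every $(j,k) \in E$, and feed it to a hypothetical solver for $\theta_{X,W}$. Since unit weights are positive reals, this is a syntactically valid input for the weighted problem, and the construction costs time linear in $|E|$, hence polynomial.

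The second step is to verify that this transformation preserves feasibility and optimality. The crucial point is that each constraint group that can appear in $X$—balance (\ref{B1}), (\ref{B2}), contiguity (\ref{c1}), integrity (\ref{A2}), node selection (\ref{N1})—together with the objective (\ref{O1}) depends on the edge weights only through the parameter $b_{(j,k)}$ and the derived distances $d_{v,(j,k)}$. Under the unit-weight assignment these quantities take exactly the values prescribed by the unweighted problem, because the shortest-path distances $d_{i,j}$ and the edge-to-vertex distances $d_{i,(j,k)} = \min(d_{i,j},d_{i,k}) + \alpha\, b_{j,k}$ are computed on the same underlying unweighted graph. Consequently a solution is feasible (resp. optimal) for the constructed weighted instance if and only if it is feasible (resp. optimal) for the original unweighted instance, so the solver's output for $\theta_{X,W}$ is directly a valid solution to $\theta_X$, establishing $\theta_X \leq_P \theta_{X,W}$.

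I expect no substantial obstacle, since the claim amounts to the containment of the unweighted case within the weighted case. The only point warranting a moment of care is confirming that $d_{v,(j,k)}$ is evaluated identically across both instances; this holds because the distance formula collapses to the unweighted shortest-path metric once all $b_{j,k}=1$, so no hidden dependence on the weights can disrupt the correspondence. In keeping with the style of the preceding reductions (Theorems~\ref{lem:B} and~\ref{lem:O}), I would state the argument in a single short paragraph.
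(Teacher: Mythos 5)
Your proof is correct and takes essentially the same approach as the paper, which simply observes that the unweighted problem is the special case of the weighted problem obtained by setting all edge weights to $1$. Your additional verification that the constraints and the distances $d_{v,(j,k)}$ are evaluated identically under unit weights is a harmless elaboration of that one-line argument.
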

Proof: The unweighted problem is a special case of the weighted problem (i.e., obtained by setting all  weights to 1).\qed

\section{Parameterizing the Complexity of the Balance Criterion}\label{sec:Bonus}

Up to this point, we have examined specific problems and classified them as either polynomially tractable or \NP. Naturally, once the objective function is dropped, the problem becomes substantially easier, as demonstrated in Theorem~\ref{lem:WNAC}, where a valid solution can always be constructed. Furthermore, Theorem~\ref{lem:B} shows that balance constraints can be chosen in a non-restrictive way. In particular, it follows that $\theta_{B,C,I,N}$ with additive tolerance $\tau=0$ always admits a valid solution for any connected graph $G=(V,E)$. This motivates the question of how restrictive $\tau$ (or, equivalently, $\Phi$) can be set while still guaranteeing the existence of a feasible solution. We address this question in the case of two districts.

\begin{lem}
Consider the problem $\theta_{B,C,I,N}$, with $p = 2$ and $|E| \geq 2$ on a weighted and connected graph $G=(V,E)$. A feasible solution can always be found efficiently for $\Phi_u\geq\frac{2|E|}{3}$ and $\Phi_l\leq\frac{|E|}{3}$. When one bound is exceeded, at least one unsolvable instance exists.
\label{lem:Bonus}
\end{lem}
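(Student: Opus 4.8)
The plan is first to strip the statement down. Because the objective is dropped only feasibility matters; because the bounds are stated in terms of $|E|$, I read balance as a restriction on the \emph{number} of edges per district; and because centers are chosen by the model and may coincide (property $N$), I can place each district's center at any vertex incident to it, so that contiguity constraint \ref{c1} reduces to requiring each district to induce a single connected subgraph. Writing $m=|E|$, the lemma thus becomes: every connected $G$ admits a partition of $E$ into two connected edge-sets $D_1,D_2$ with $|D_1|,|D_2|\in[\Phi_l,\Phi_u]$. Since $\Phi_l\le\frac m3\le\frac{2m}3\le\Phi_u$ forces $[\tfrac m3,\tfrac{2m}3]\subseteq[\Phi_l,\Phi_u]$, and since $|D_1|\in[\tfrac m3,\tfrac{2m}3]$ iff $|D_2|=m-|D_1|\in[\tfrac m3,\tfrac{2m}3]$, it suffices to exhibit one connected/connected split whose smaller part has at least $\tfrac m3$ edges and whose larger part has at most $\tfrac{2m}3$ edges.

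\textbf{Achievability.} I would build such a split from a rooted DFS spanning tree of $G$, exploiting that in an undirected DFS tree every non-tree edge joins a vertex to one of its ancestors (no cross edges). Charge each edge to its deeper endpoint and, for a vertex $v$, let $g(v)$ be the number of edges charged into the subtree $T_v$. The key structural fact I would verify is that the set $S_v$ of edges charged into $T_v$ induces a connected subgraph with connected complement: the tree edges inside $T_v$ together with the edge from $v$ to its parent form a connected subtree, the back edges charged into $T_v$ only re-attach already-present deeper endpoints, and the complementary edges span the connected tree $G[V\setminus V(T_v)]$. More generally, for a vertex $u$ with children $c_1,\dots,c_r$, every union $\bigcup_{k\in K}S_{c_k}$ is connected — each $S_{c_k}$ contains $u$ through the edge $(u,c_k)$ — with connected complement, and has exactly $\sum_{k\in K}g(c_k)$ edges; moreover the edges charged to $u$ itself form a star at $u$ whose members can be moved in or out one at a time without destroying either side's connectivity.

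\textbf{Selecting the split.} With these moves in hand, I would descend from the root, always to the child of largest $g$, and stop at the first vertex $u$ with $g(u)>\tfrac{2m}3$ all of whose children satisfy $g(c)\le\tfrac{2m}3$ (possibly a leaf). Then I distinguish three cases. If some child $c$ has $g(c)\in[\tfrac m3,\tfrac{2m}3]$, take $D_1=S_c$. Otherwise every child has $g(c)<\tfrac m3$; if their total is at least $\tfrac m3$, accumulate children greedily, since crossing the threshold $\tfrac m3$ in steps smaller than $\tfrac m3$ lands the running sum in $[\tfrac m3,\tfrac{2m}3)$. If instead the children total less than $\tfrac m3$, then the star charged to $u$ must contain more than $\tfrac m3$ edges (as $g(u)>\tfrac{2m}3$), so I take all child subtrees and add, unit by unit, enough of $u$'s star edges to reach $\lceil\tfrac m3\rceil$. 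In every case both parts are connected and lie in $[\tfrac m3,\tfrac{2m}3]$, the procedure is a DFS plus a linear scan and hence efficient, and each center is placed at an incident vertex.

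\textbf{Tightness and the main obstacle.} For the converse I would use the spider with three legs of equal length $\ell$, so $m=3\ell$. The crucial observation is that any split of its edges into two connected parts has smaller part of size at most $\ell=\tfrac m3$ and larger part at least $2\ell=\tfrac{2m}3$: a more balanced split would have to cut some leg strictly between its endpoints, leaving the outer segment of that leg detached from the hub $v_0$ and hence disconnecting whichever side does not contain $v_0$. Consequently, if $\Phi_u<\tfrac{2m}3$ the unavoidable large part violates the upper bound, and if $\Phi_l>\tfrac m3$ the unavoidable small part violates the lower bound, so the spider is infeasible in either case. The main obstacle throughout is the achievability direction: one cannot simply grow a single district greedily, because keeping it connected says nothing about the complement, which can fragment exactly as in the spider. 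The DFS-subtree charging is chosen precisely so that complement-connectivity holds automatically for every candidate $S_v$; the remaining difficulty is then the bounded subset-sum argument, where the per-item bound $g(c)\le\tfrac{2m}3$ guaranteed at $u$, together with the unit-step star edges, is what prevents overshooting the window $[\tfrac m3,\tfrac{2m}3]$.
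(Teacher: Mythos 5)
Your proof is correct and follows essentially the same route as the paper's: your DFS tree with back edges charged to their deeper endpoints is the paper's auxiliary tree (non-tree edges re-attached as pendant edges at one endpoint) in different clothing, your descent to the heaviest child replicates the paper's iterative relocation of the cut-point into the largest component, and both arguments finish with the identical three-case accumulation of hanging components into the window $[m/3,\,2m/3]$ plus the three-legged spider for tightness. One side remark is overstated --- the star edges at $u$ cannot in general be moved ``in or out one at a time'' while preserving the complement's connectivity when some child subtrees are unselected (moving the parent edge of $u$ out of the complement can detach an unselected subtree, having no back edges above $u$, from the upper part) --- but your algorithm only moves star edges in the case where all child subtrees are selected, where the claim does hold, so the argument as executed is sound.
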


\begin{proof}
For $|E|\leq 3$ this claim holds trivially. Hence, this proof  focuses on instances with $|E|> 3$.

We first show how to construct a feasible solution. To this end, construct a tree-based auxiliary structure, denoted as the graph $G_T=(V_T, E_T)$, as follows. Begin by setting $V_T$ initially to the original vertex set $V$ and set $E_T=\emptyset$. Next, iterate over all edges in $E$ and insert each into $E_T$, as long as the additional does not form a cycle. If inserting an edge $(u,v)$ would create a cycle in $G_T$, introduce a new auxiliary vertex $v'$ and connect it to one endpoint of $(u,v)$ (chosen arbitrarily), and add $v'$ to $V_T$. This ensures that $G_T$ remains a tree and that each edge is included exactly once. By construction, any adjacent edges in $G_T$ are also adjacent in the original graph, and any edge-based allocations on this tree corresponds to a valid allocation in $G$. Notice that $|E|=|E_T|$.

The purpose of $G_T$ is to reduce any possible graph structure to trees. As that tree could have been the input graph in the first place, this does not restrict the results. In the resulting tree, there exists a unique vertex incident to edges assigned to both $c_1$ and $c_2$. We refer to this vertex as the \emph{cut-point}.

Next, to facilitate balanced districting, evaluate vertices $v\in V_T$ as potential cut-points and analyze the subtrees that would result from their removal. For a fixed cut point, each incident edge can be associated with a different connected subtree (component) once that vertex is removed. The number of edges in each such subtree is recorded with a label next to each edge that was originally incident to the cut-point. Figure~\ref{figure6} illustrates this idea via a small numerical example (the black vertex represents the cut-point).

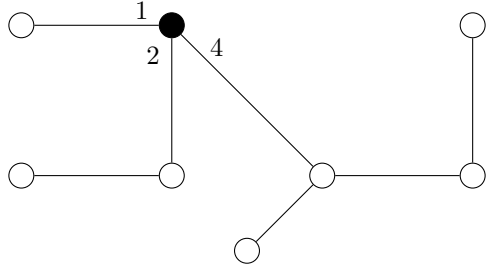
\begin{figure}[H]
	\centering
	\begin{tikzpicture}
        \node[shape=circle,draw=black, fill=black] (112) at (0,5) {};
        \node[shape=circle,draw=black] (2) at (-2,3) {};
        \node[shape=circle,draw=black] (1) at (0,3) {};
        \node[shape=circle,draw=black] (3) at (2,3) {};
        \node[shape=circle,draw=black] (4) at (4,3) {};
        \node[shape=circle,draw=black] (5) at (4,5) {};
        \node[shape=circle,draw=black] (6) at (1,2) {};
        \node[shape=circle,draw=black] (7) at (-2,5) {};
        \node[shape=circle] (label1) at (-0.4,5.2) {1};
        \node[shape=circle] (label2) at (0.6,4.7) {4};
        \node[shape=circle] (label3) at (-0.25,4.6) {2};
        \draw[-] (1) -- (2);
        \draw[-] (112) -- (1);
        \draw[-] (112) -- (3);
        \draw[-] (5) -- (4);
        \draw[-] (3) -- (4);
        \draw[-] (3) -- (6);
        \draw[-] (112) -- (7);
	\end{tikzpicture}
	\caption{The edge labels represent the size of the connected component after the cut-point (the black vertex) is removed.}
    \label{figure6}
\end{figure}

We iteratively distinguish three cases:

1. \textbf{One component exceeds $\frac{2|E|}{3}$ edges:}  
   Let edge $e$ be the edge of the largest component that  is incident to the cut-point. We relocate the cut point to be the other vertex incident to $e$. This reduces its size of the largest component by at least one. Any smaller components size is still bounded by $\frac{|E|}{3}+1<\frac{2|E|}{3}$ edges. We repeat this process until no component exceeds $\frac{2|E|}{3}$ edges.

2. \textbf{One component contains between $\frac{|E|}{3}$ and $\frac{2|E|}{3}$ edges:}  
   Assign this component to district $c_1$, and assign the remaining edges to $c_2$.

3. \textbf{All components are smaller than $\frac{|E|}{3}$:}  
   Iteratively assign components to $c_1$ until their combined size reaches at least $\frac{|E|}{3}$. As each individual component is smaller than $\frac{|E|}{3}$, their total cannot exceed $\frac{2|E|}{3}$. Assign the remaining edges to $c_2$.

After executing the procedure in  case 2 or 3, each district has between $\frac{|E|}{3}$ and $\frac{2|E|}{3}$ edges and is valid.

To show the bound is tight, consider Figure \ref{figure5} where any allocation forces one district to take 2 arms.

\begin{figure}[H]
		\centering
		\begin{tikzpicture}

            \node[shape=circle,draw=black] (112) at (0,5) {};
            \node[shape=circle,draw=black] (2) at (-2,3) {};
            \node[shape=circle,draw=black] (1) at (0,3) {};
            \node[shape=circle,draw=black] (3) at (2,3) {};
		      \node[shape=circle] (21) at (-2,2.2) {.};
               \node[shape=circle] (21) at (-2,2.4) {.};
                \node[shape=circle] (21) at (-2,2) {.};
            \node[shape=circle] (11) at (0,2) {.};
            \node[shape=circle] (11) at (0,2.2) {.};
            \node[shape=circle] (11) at (0,2.4) {.};
            \node[shape=circle] (31) at (2,2) {.};
            \node[shape=circle] (31) at (2,2.2) {.};
            \node[shape=circle] (31) at (2,2.4) {.};
			
			\draw[-,red, dashed] (112) -- (2);
            \draw[-,red, dashed] (112) -- (1);
            \draw[-,blue] (112) -- (3);

		\end{tikzpicture}
		\caption{A graph with three arms, each with the same number of edges; for this graph, the problem  $\theta_{B,C,I,N}$ has no feasible solutions when $\Phi_u<\frac{2|E|}{3}$ or $\Phi_l>\frac{|E|}{3}$.}  
        \label{figure5}
	\end{figure}
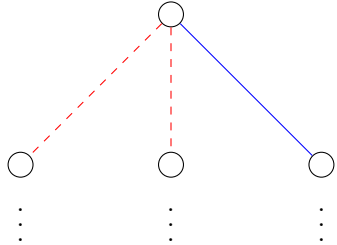


For the additive version of Balance, the claim is applicable to $\tau=\frac{1}{3}$ as \[\left(1+\frac{1}{3}\right)\cdot \overline{b}=\frac{2|E|}{3}\quad and\quad \left(1-\frac{1}{3}\right)\cdot \overline{b}=\frac{|E|}{3}. \]

For the multiplicative version of Balance, the claim is applicable to $\tau=\frac{2}{3}$ as \[\frac{\overline{b}}{\frac{2}{3}}>\frac{2|E|}{3}\quad and \quad \frac{2}{3}\cdot \overline{b}=\frac{|E|}{3}. \]
\end{proof}

The example with the multi-armed graph can be generalize to $p$ districts by $p+1$ arms. One district must take two arms, giving $\Phi_l\geq \frac{2|E|}{p+1}$, while the average district load is $\overline{b} = \frac{|E|}{p}$. For the additive version, this generalizes to

\[
\frac{2|E|}{p+1} \leq \frac{\overline{b}}{\tau} 
\quad \Rightarrow \quad
\tau \leq \lim_{p \to \infty} \frac{p}{p+1} \cdot \frac{1}{2} = \frac{1}{2}.
\]

\section{Conclusion}\label{sec:con}
This paper addresses a fundamental question: under which conditions do districting problems become computationally intractable? To answer it, we systematically analyze the computational complexity tipping points of logistics districting by grouping commonly used criteria/requirements  and examining the complexity of their combinations. We classify each resulting subproblem as either solvable in polynomial time or \NP, via formal mathematical proofs. This comprehensive classification offers practical insights into how one might design districting systems by strategically relaxing certain constraints to ensure tractability without fully compromising the usefulness of the solution. 

Beyond its theoretical contributions, this work has important real-world implications. The precise sources of computational hardness enable a deeper understanding of important trade-offs. Practitioners can now better identify which requirements are essential and which may be safely omitted to achieve efficient and scalable solutions.

While our study focuses on the dichotomy between polynomial-time solvability and \NP ness, many \NP \ variants may still admit high-quality approximate solutions. Investigating approximation algorithms and performance guarantees for these hard cases is a promising direction for future research. Such work would further bridge the gap between theoretical intractability and practical usability, helping to develop robust, near-optimal tools for complex districting tasks.

\bibliographystyle{agsm}
 \bibliography{bib}

\end{document}